\newtheorem{exemp}{Example}[section]
\newtheorem{remark}{Remark}
  \theoremstyle{plain}
  \newtheorem{lem}{\protect\lemmaname}
  \theoremstyle{remark}
  \theoremstyle{definition}
  \newtheorem{condition}{\protect\conditionname}
  \theoremstyle{plain}
  \newtheorem{prop}{\protect\propositionname}
\newcommand\dd{\mathrm{d}}
\newcommand\varm{\text{var}}
\newcommand\bigo{\text{O}}
\DeclareMathOperator*{\argmax}{arg\,max}
  \providecommand{\conditionname}{Condition}
  \providecommand{\lemmaname}{Lemma}
  \providecommand{\propositionname}{Proposition}
  \providecommand{\remarkname}{Remark}
\begin{document}

\begin{frontmatter}

\title{Accelerating Metropolis--Hastings algorithms by Delayed Acceptance\protect\thanksref{T1}}
\runtitle{Delayed Acceptance for Metropolis--Hastings}
\thankstext{T1}{This document has been typeset using the IMS {\sf imsart} \LaTeX~style file, but it has not been
previously submitted to any IMS journal.}

\begin{aug}
 \author{\snm{{\sc Marco Banterle}}}
 \affiliation{Universit{\'e} Paris-Dauphine, CEREMADE, and CREST, Paris}
 \author{\snm{{\sc Clara Grazian}}}
 \affiliation{Sapienza Universit\`a di Roma, Universit{\'e} Paris-Dauphine, CEREMADE, and CREST, Paris}
 \author{\snm{{\sc Anthony Lee}}}
 \affiliation{University of Warwick, Coventry}
 \author{\snm{{\sc Christian P.~Robert}}}
 \affiliation{Universit{\'e} Paris-Dauphine, CEREMADE, Dept. of Statistics, University of Warwick, and CREST, Paris}
\end{aug}

%
%
%
%

\begin{abstract} MCMC algorithms such as Metropolis--Hastings algorithms are
slowed down by the computation of complex target distributions as exemplified
by huge datasets. We offer in this paper a useful generalisation of the Delayed Acceptance approach, 
devised to reduce the computational costs of such algorithms by a simple and universal
divide-and-conquer strategy. The idea behind the generic acceleration is to
divide the acceptance step into several parts, aiming at a major reduction in
computing time that out-ranks the corresponding reduction in acceptance
probability.  
Each of the components can be sequentially compared with a uniform
variate, the first rejection signalling that the proposed value is considered no
further. We develop moreover theoretical bounds for the variance of associated 
estimators with respect to the variance of the standard Metropolis--Hastings and 
detail some results on optimal scaling and general optimisation of the procedure.
We illustrate those accelerating features on a series of examples.  \end{abstract}

\end{frontmatter}
\noindent {\bf Keywords:} Large Scale Learning and Big Data, MCMC,
likelihood function, acceptance probability,
mixtures of distributions, Jeffreys prior

\section{Introduction}\label{intro} \label{sec:zero_zero}

When running an MCMC sampler such as Metropolis--Hastings algorithms
\citep{robert:casella:2004}, the complexity of the target density required by
the acceptance ratio may lead to severe slow-downs in the execution of the
algorithm. A direct illustration of this difficulty is the simulation from a
posterior distribution involving a large dataset of $n$ points for which the
computing time is at least of order $\bigo(n)$. Several solutions to this issue have
been proposed in the recent literature \citep{korattikara:chen:welling:2013,
neiswanger:wang:xing:2013, scott:etal:2013,wang:dunson:2013}, taking advantage
of the likelihood decomposition 
\begin{equation}\label{eq:baselike}
\prod_{i=1}^n \ell(\theta|x_i) 
\end{equation}
to handle subsets of the data on different processors (CPU), graphical units (GPU), or
even computers.  However, there is no consensus on the method of choice, some
leading to instabilities by removing most prior inputs and others to
approximations delicate to evaluate or even to implement.

Our approach here is to delay acceptance (rather than rejection as in
\cite{tierney:mira:1998}) by sequentially comparing parts of the MCMC acceptance
ratio to independent uniforms, in order to stop earlier the computation of the
aforesaid ratio, namely as soon as one term is below the corresponding uniform.

More formally, consider a generic Metropolis--Hastings algorithm where the acceptance ratio $\pi(y)q(y,x) / \pi(x)q(x,y)$ is 
compared with a $\,\mathcal{U}(0,1)$ variate to decide whether or not the Markov chain switches from the current value $x$ to the proposed value $y$
\citep{robert:casella:2004}. If we now decompose the ratio as an arbitrary product
\begin{equation}\label{eq:prodeck}
{\pi(y)\,q(y,x)}\big/{\pi(x)q(x,y)} = \prod_{k=1}^d \rho_k(x,y)\,
\end{equation}
where the only constraint is that the functions $\rho_k$ are all positive and satisfy the balance condition $\rho_k(x,y)
= \rho_k(y,x)^{-1}$ and then accept the move with probability 
\begin{equation}\label{eq:prodike}
\prod_{k=1}^d \min\left\{\rho_k(x,y),1\right\}\,,
\end{equation}
i.e. by successively comparing uniform variates $u_k$ to the terms $\rho_k(x,y)$,
the motivation for our delayed approach is that the same target density $\pi(\cdot)$ is stationary for the resulting Markov chain. 

The mathematical validation of this simple if surprising result can be seen as a consequence of
\cite{christen:fox:2005}. This paper reexamines \cite{fox:nicholls:1997}, where the idea of testing for acceptance using
an approximation and before computing the exact likelihood was first suggested.  In \cite{christen:fox:2005}, the
original proposal density $q$ is used to generate a value $y$ that is tested against an approximate target $\tilde \pi$.
If accepted, $y$ can be seen as coming from a pseudo-proposal $\tilde q$ that simply is formalising the earlier preliminary step and is then tested against the true target $\pi$. The validation in \cite{christen:fox:2005} follows from standard detailed balance arguments; we will focus formally on this point in Section \ref{sec:convergence}. 

In practice, sequentially comparing those probabilities with $d$ uniform variates means
that the comparisons stop at the first rejection, implying a gain in computing time
if the most costly items in the product \eqref{eq:prodeck} are saved for the final comparisons.

Examples of the specific two-stage Delayed Acceptance as defined by \cite{christen:fox:2005} can be found in
\cite{golightly:2014}, in the pMCMC context, and in \cite{shestopaloff:neal:2013}.

The major drawback of the scheme is that Delayed Acceptance \emph{efficiently}
reduces the computing cost only when the approximation $\tilde \pi$ is ``good enough"  or ``flat enough", since the
probability of acceptance of a proposed value will always be smaller than in the original Metropolis--Hastings scheme.
In other words, the original Metropolis--Hastings kernel dominates the new one in Peskun's \citep{peskun:1973} sense. 
The most relevant question raised by \cite{christen:fox:2005} is thus how to achieve a proper approximation; note in
fact that while in Bayesian statistics a decomposition of the target is always available, by breaking original data in
subsamples and considering the corresponding likelihood parts or even by just separating the prior, proposal and
likelihood ratio into different factors, these decompositions may just lead to a deterioration of the algorithm
properties without impacting the computational efficiency.


\begin{figure}
\begin{center}
\includegraphics[width=.4\textwidth]{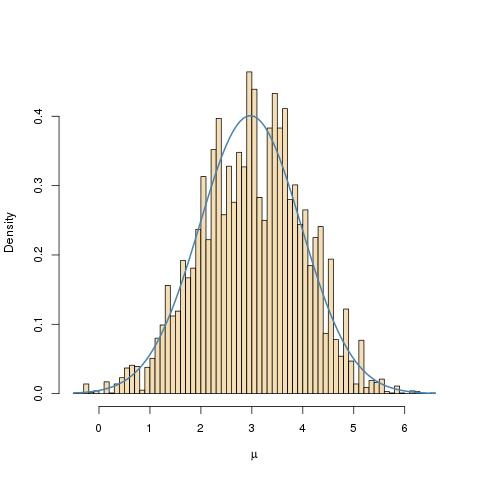}
\end{center}
\caption{\label{fig:norlayed}
Fit of a two-step Metropolis--Hastings algorithm applied to a normal-normal
posterior distribution $\mu|x\sim N(x/(\{1+\sigma_\mu^{-2}\},1/\{1+\sigma_\mu^{-2}\})$
when $x=3$ and $\sigma_\mu=10$, based on $T=10^5$ iterations and a first acceptance
step considering the likelihood ratio and a second acceptance step considering the prior ratio,
resulting in an overall acceptance rate of 12\%
}\end{figure}


\begin{figure}
\includegraphics[width=.3\textwidth]{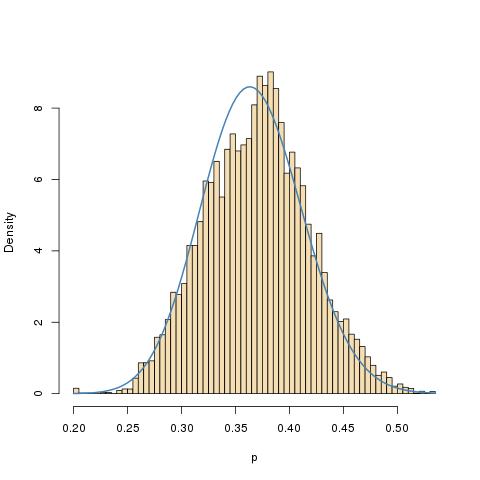}
\includegraphics[width=.3\textwidth]{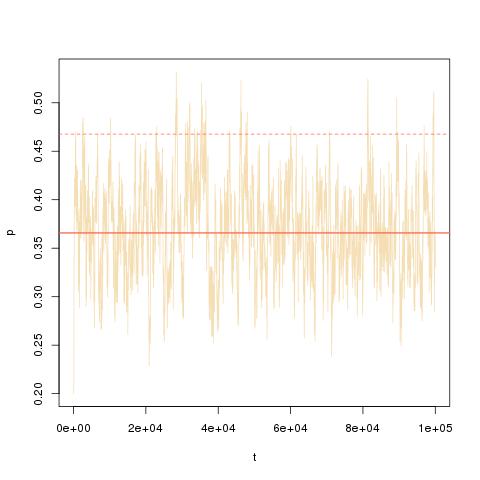}
\includegraphics[width=.3\textwidth]{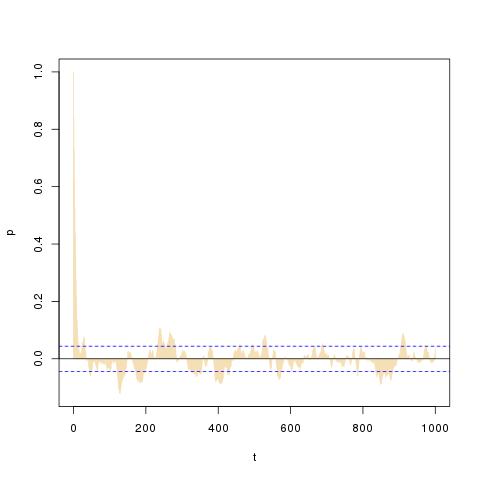}
\caption{\label{fig:binomial}
(left)
Fit of a multiple-step Metropolis--Hastings algorithm applied to a Beta-binomial
posterior distribution $p|x\sim Be(x+a,n+b-x)$ when $N=100$, $x=32$, $a=7.5$    
and $b=.5$. The binomial $\mathcal{B}(N,p)$ likelihood is replaced with a product of $100$ Bernoulli terms and
an acceptance step is considered for the ratio of each term. The histogram is based on $10^5$
iterations, with an overall acceptance rate of 9\%; 
(centre) raw sequence of successive values of $p$ in the Markov chain simulated 
in the above experiment; (right) autocorrelogram of the above sequence.
}\end{figure}

However, even in these simple cases, it is possible to find examples where Delayed Acceptance may be profitable.
Consider for instance resorting to a costly non-informative prior distribution (as illustrated in Section \ref{sec:mixis} in the
case of mixtures); 
here the first acceptance step can be solely based on the ratio of the likelihoods and the second step, which involves the ratio of the priors, does not require to be computed when the first test leads to rejection. 
Even more often, the converse decomposition applies to complex or
just costly likelihood functions, in that the prior ratio may first be used to
eliminate values of the parameter that are too unlikely for the prior density.
As shown in Figure \ref{fig:norlayed}, a standard normal-normal example confirms that the true posterior and the
histogram resulting from such a simulated sample are in agreement.

In more complex settings, as for example in ``Big Data" settings where the likelihood is made of a very large number of
terms, the above principle also applies to any factorisation of the like of \eqref{eq:baselike} so that each individual
likelihood factor can be evaluated separately. This approach increases both the variability of the
evaluation and the potential for rejection, but, if each term of the factored likelihood is
sufficiently costly to compute, the decomposition brings some improvement in
execution time. The graphs in Figure \ref{fig:binomial} illustrate
an implementation of this perspective in the Beta-binomial case, namely when
the binomial $\mathcal{B}(N,p)$ observation $x=32$ is replaced with a sequence of $N$ Bernoulli
observations. The fit is adequate on $10^5$ iterations, but the autocorrelation
in the sequence is very high (note that the ACF is for the 100 times thinned
sequence) while the acceptance rate falls down to 9\%. (When the original
$y=32$ observation is (artificially) divided into 10, 20, 50, and 100 parts,
the acceptance rates are 0.29, 0.25, 0.12, and 0.09, respectively.) The gain in using this decomposition is only
appearing when each Bernoulli likelihood computation becomes expensive enough.

On one hand, the order in which the product \eqref{eq:prodike} is explored determines the computational efficiency of the scheme, while, on the other hand, it has no impact on the overall convergence of the resulting Markov chain, since the acceptance of a proposal does require computing all likelihood values. It therefore makes sense to try to optimise this order by ranking the entries in a way that improves the execution speed of the algorithm (see Section \ref{sec:optim_rank}).

We also stress that the Delayed Acceptance principle remains valid even when the likelihood function or the prior are not integrable
over the parameter space. Therefore, the prior may well be improper. For instance, when the prior distribution
is constant, a two-stage acceptance scheme reverts to the original Metropolis--Hastings one.

Finally, while the Delayed Acceptance methodology is intended to cater to complex likelihoods or priors, it does not bring a solution {\em per se} to the ``Big Data'' problem in that (a) all terms in the product must eventually be computed; and (b) all terms previously computed (i.e., those computed for the last accepted value of the parameter) 
must be either stored for future comparison or recomputed. See, e.g., \cite{scott:etal:2013,wang:dunson:2013}, for
recent entries on different parallel ways of handling massive datasets.

The plan of the paper is as follows: in Section \ref{sec:convergence}, we validate the decomposition of the acceptance
step into a sequence of decisions, arguing about the computational gains brought by this generic modification of
Metropolis--Hastings algorithms and further analysing the relation between the proposed method and the
Metropolis--Hastings algorithm in terms of convergence properties and asymptotic variances of statistical estimates.  In Section \ref{sec:others} we briefly state the relations between Delayed Acceptance and other methods present in the literature.  In Section \ref{sec:optim} we aim at giving some intuitions on how to improve the behaviour of Delayed Acceptance by ranking the factors in a given decomposition to achieve optimal computational efficiency and finally give some preliminary results in terms of optimal scaling for the proposed method.  Then Section \ref{sec:examples} studies Delayed Acceptance within three realistic environments, the first one made of logistic regression targets, the second one alleviating the computational burden from a Geometric Metropolis adjusted Langevin algorithm and a third one handling an original analysis of a parametric mixture model via genuine Jeffreys priors. Section \ref{sec:conclusion} concludes the paper.

\section{Validation and convergence of Delayed Acceptance} \label{sec:convergence}

In this section, we establish that Delayed Acceptance is a valid Markov chain Monte Carlo scheme and analyse on a
theoretical basis the differences with the original version.

\subsection{The general scheme}

We assume for simplicity that the target distribution $\pi$ and the proposal distributions $Q(x,\cdot)$ all admit
densities w.r.t. the Lebesgue or counting measures. We also denote  by $\pi$ the target density and let $q(x,y)$ denote
the proposal density.

Let $(X_{n})_{n\geq1}$ be a Markov chain evolving on $\mathsf{X}$
with Metropolis--Hastings Markov transition kernel $P$ associated with $q$ and $\pi$,
i.e. for $A \in \mathcal B(\mathsf{X})$
\[
P(x,A):=\int_{A}q(x,y)\alpha(x,y){\rm d}y+\left(1-\int_{\mathsf{X}}q(x,y)\alpha(x,y){\rm d}y\right)\mathbf{1}_{A}(x),
\]
where 
\[
\alpha(x,y):=1\wedge r(x,y),\qquad r(x,y):=\frac{\pi(y)q(y,x)}{\pi(x)q(x,y)}.
\]
Above, $\alpha(x,y)$ is known as the Metropolis--Hastings acceptance probability and $r(x,y)$ as the
Metropolis--Hastings acceptance ratio.

We consider the class of ``Delayed acceptance'' Markov kernels associated
with $P$, which are defined by factorisations of the function $r$ as
\begin{equation}
r(x,y)=\prod_{k=1}^{d}\rho_{k}(x,y)\label{eq:factorize_r}
\end{equation}
with all components in the product satisfying $\rho_{k}(x,y)=\rho_{k}(y,x)^{-1}$. The associated Delayed Acceptance
Markov kernel is then defined as
\[
\tilde{P}(x,A):=\int_{A}q(x,y)\tilde{\alpha}(x,y){\rm d}y+\left(1-\int_{\mathsf{X}}q(x,y)\tilde{\alpha}(x,y){\rm d}y\right)\mathbf{1}_{A}(x),
\]
where
\[
\tilde{\alpha}(x,y):=\prod_{k=1}^{d}\{1\wedge\rho_{k}(x,y)\}.
\]
We will denote by $(\tilde{X}_{n})_{n\geq1}$ the Markov chain associated with $\tilde{P}$. 

The order in which the sequence of functions $\rho_{k}$ appears in the factorisation \eqref{eq:factorize_r} is important for algorithmic specification, as
can be seen in Algorithm~\ref{alg:Delayed-acceptance}.  It means that $\rho_{1}(x,Y)$ is evaluated first,
$\rho_{2}(x,Y)$ second, and so on until $\rho_{d}(x,Y)=r(x,Y)/\prod_{k=1}^{d-1}\rho_{k}(x,Y)$ which is last, 
with the motivation that ``early rejection'' can allow computational savings by avoiding the computation of the
subsequent $\rho_{k}(x,Y)$.

\begin{algorithm}[H]

\protect\caption{Delayed Acceptance\label{alg:Delayed-acceptance}}
To sample from $\tilde{P}(x,\cdot)$:
\begin{enumerate}
\item Sample $Y\sim Q(x,\cdot)$.
\item For $k=1,\ldots,d$:

\begin{itemize}
\item With probability $1\wedge\rho_{k}(x,Y)$ continue, otherwise stop
and output $x$.
\end{itemize}
\item Output $Y$.\end{enumerate}
\end{algorithm}

\subsection{Validation}

The first lemma is a standard representation leading to the validation of the Delayed Acceptance Markov chain:
\begin{lem}\label{lem:reversible_ratio_accept_probs}
For any Markov chain with transition kernel $\Pi$ of the form
\[
\Pi(x,A)=\int_{A}q(x,y)a(x,y){\rm d}y+\left(1-\int_{\mathsf{X}}q(x,y)a(x,y){\rm d}y\right)\mathbf{1}_{A}(x),
\]
and satisfying detailed balance, the function $a(\cdot)$ satisfies (for
$\pi$-a.a. $x,y$)
\[
\frac{a(x,y)}{a(y,x)}=r(x,y).
\]
\end{lem}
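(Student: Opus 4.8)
The plan is to exploit the detailed balance condition for $\Pi$ directly. Detailed balance states that $\pi(x)\Pi(x,\dd y) = \pi(y)\Pi(y,\dd x)$ as measures on $\mathsf{X}\times\mathsf{X}$. First I would isolate the ``off-diagonal'' part of the kernel: writing $\Pi(x,\dd y) = q(x,y)a(x,y)\,\dd y + (1-\int q(x,z)a(x,z)\,\dd z)\delta_x(\dd y)$, the diagonal term $\delta_x(\dd y)$ contributes a measure concentrated on the set $\{x=y\}$, which is symmetric in $(x,y)$ and hence matches automatically on that diagonal set. So detailed balance restricted to $x\neq y$ forces equality of the absolutely continuous parts, namely
\[
\pi(x)q(x,y)a(x,y) = \pi(y)q(y,x)a(y,x)
\]
for (Lebesgue-, hence $\pi$-) almost all pairs $(x,y)$.

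From here the conclusion is essentially algebra: rearranging gives
\[
\frac{a(x,y)}{a(y,x)} = \frac{\pi(y)q(y,x)}{\pi(x)q(x,y)} = r(x,y),
\]
valid wherever the denominators $a(y,x)$, $\pi(x)$ and $q(x,y)$ are nonzero. The ``$\pi$-a.a.'' qualifier in the statement is exactly what lets us discard the null sets where these quantities vanish or where the almost-everywhere identity above fails.

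The main obstacle, such as it is, is purely measure-theoretic bookkeeping rather than anything deep: one must argue carefully that detailed balance for the full kernel $\Pi$ — a statement about measures — can be decomposed into the diagonal contribution and the absolutely continuous contribution, and that equating the two joint measures $\pi(x)q(x,y)\,\dd x\,\dd y$ and $\pi(y)q(y,x)\,\dd x\,\dd y$ (after multiplying by $a$) is legitimate off the diagonal. I would handle this by testing against product sets $A\times B$ with $A\cap B=\emptyset$, on which the Dirac parts drop out, yielding $\int_A\int_B \pi(x)q(x,y)a(x,y)\,\dd y\,\dd x = \int_B\int_A \pi(y)q(y,x)a(y,x)\,\dd x\,\dd y$; since this holds for all such disjoint $A,B$, the integrands agree a.e. off the diagonal, and a further routine argument (approximating the diagonal by disjoint pieces, or simply noting the diagonal is Lebesgue-null in $\mathsf{X}\times\mathsf{X}$ when densities exist) extends this to a.e. on all of $\mathsf{X}\times\mathsf{X}$. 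Then division gives the claimed ratio identity.
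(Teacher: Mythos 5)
Your argument is correct and follows essentially the same route as the paper, whose proof simply asserts the density-level detailed balance identity $\pi(x)q(x,y)a(x,y)=\pi(y)q(y,x)a(y,x)$ and divides. The only difference is that you spell out the measure-theoretic justification (separating the Dirac part and testing against disjoint product sets), which the paper leaves implicit.
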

\begin{proof}
This follows immediately from the detailed balance condition
\[
\pi(x)q(x,y)a(x,y)=\pi(y)q(y,x)a(y,x).
\]
\end{proof}
The Delayed Acceptance Markov chain $(\tilde{X}_{n})_{n\geq1}$ is then associated with the intended target:
\begin{lem}
$(\tilde{X}_{n})_{n\geq1}$ is a $\pi$-reversible Markov chain.\end{lem}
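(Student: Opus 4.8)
The plan is to verify that $\tilde{P}$ satisfies detailed balance with respect to $\pi$, from which $\pi$-reversibility follows at once. Write $\tilde{P}(x,\cdot)$ as the sum of an ``off-diagonal'' part with density $q(x,y)\tilde{\alpha}(x,y)$ and a ``rejection'' part supported on the diagonal $\{x\}$. The diagonal part is automatically symmetric under interchanging its two arguments, so it suffices to establish
\[
\pi(x)\,q(x,y)\,\tilde{\alpha}(x,y) = \pi(y)\,q(y,x)\,\tilde{\alpha}(y,x)
\]
for ($\pi\otimes Q$-almost) all $x\neq y$; equivalently, that $\tilde{\alpha}(x,y)/\tilde{\alpha}(y,x) = r(x,y)$, which is precisely the condition identified in Lemma~\ref{lem:reversible_ratio_accept_probs}.

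The one substantive step is the elementary identity $\frac{1\wedge t}{1\wedge t^{-1}} = t$, valid for every $t>0$ (check the cases $t\leq 1$ and $t\geq 1$ separately: the numerator and denominator swap roles). Applying it termwise with $t=\rho_k(x,y)$, and using the balance condition $\rho_k(y,x)=\rho_k(x,y)^{-1}$, gives
\[
\frac{\tilde{\alpha}(x,y)}{\tilde{\alpha}(y,x)} = \prod_{k=1}^{d}\frac{1\wedge\rho_k(x,y)}{1\wedge\rho_k(x,y)^{-1}} = \prod_{k=1}^{d}\rho_k(x,y) = r(x,y) = \frac{\pi(y)q(y,x)}{\pi(x)q(x,y)},
\]
which rearranges into the displayed detailed balance relation.

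Finally I would convert this pointwise identity into the measure-theoretic statement of reversibility, namely $\int_{A}\pi(\mathrm{d}x)\,\tilde{P}(x,B) = \int_{B}\pi(\mathrm{d}x)\,\tilde{P}(x,A)$ for all $A,B\in\mathcal{B}(\mathsf{X})$, by integrating the detailed balance identity over $A\times B$ and adding the diagonal contribution, which is symmetric in $A$ and $B$. I do not anticipate a genuine obstacle: the only things to keep in mind are that $\tilde{\alpha}(x,y)\in[0,1]$ so that $\tilde{P}$ is a bona fide Markov kernel, and that everything need only hold almost everywhere, which is harmless since $Q(x,\cdot)$ admits a density. Essentially all the content is in the identity $\tilde{\alpha}(x,y)/\tilde{\alpha}(y,x)=r(x,y)$; the rest is bookkeeping.
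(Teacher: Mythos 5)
Your proof is correct and follows essentially the same route as the paper: reduce reversibility to the identity $\tilde{\alpha}(x,y)/\tilde{\alpha}(y,x)=r(x,y)$ and establish it termwise via $(1\wedge t)/(1\wedge t^{-1})=t$ together with $\rho_k(y,x)=\rho_k(x,y)^{-1}$. The only cosmetic difference is that you spell out the detailed-balance rearrangement and the diagonal symmetry explicitly, whereas the paper delegates that reduction to its Lemma~\ref{lem:reversible_ratio_accept_probs}.
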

\begin{proof}
From Lemma~\ref{lem:reversible_ratio_accept_probs} it suffices to
verify that $\tilde{\alpha}(x,y)/\tilde{\alpha}(y,x)=r(x,y)$. Indeed,
we have
\begin{eqnarray*}
\frac{\tilde{\alpha}(x,y)}{\tilde{\alpha}(y,x)} & = &
\frac{\prod_{k=1}^{d}\{1\wedge\rho_{k}(x,y)\}}{\prod_{k=1}^{d}\{1\wedge\rho_{k}(y,x)\}}\\
 & = & \prod_{k=1}^{d}\frac{1\wedge\rho_{k}(x,y)}{1\wedge\rho_{k}(y,x)}\\
 & = & \prod_{k=1}^{d}\rho_{k}(x,y)=r(x,y),
\end{eqnarray*}
since $\rho_{k}(y,x)=\rho_{k}(x,y)^{-1}$ and $(1\wedge a)/(1\wedge a^{-1})=a$ for $a\in\mathbb{R}_{+}$.\end{proof}
\begin{remark}
It is immediate to show that
$$\tilde{\alpha}(x,y)=\prod_{k=1}^{d}\{1\wedge\rho_{k}(x,y)\}\leq1\wedge\prod_{k=1}^{d}\rho_{k}(x,y)=1\wedge r(x,y)=\alpha(x,y),$$
since $(1\wedge a)(1\wedge b)\leq(1\wedge ab)$ for $a,b\in\mathbb{R}_{+}$.\end{remark}


\subsection{Comparisons of the kernels $P$ and $\tilde{P}$}

Given a probability measure $\mu$, let us denote
\begin{align*}
\mu(f)&:=\int_{\mathsf{E}}f(x)\mu({\rm d}x)\,,\quad L^{2}(\mathsf{E},\mu):=\{f:\mu(f^{2})<\infty\}\\
L_{0}^{2}\bigl(\mathsf{E},\mu\bigr)&:=\bigl\{ f\in L^{2}(\mathsf{E},\mu):\mu\bigl(f\bigr)=0\bigr\}\,.
\end{align*}
For a generic Markov kernel $\Pi:\mathsf{E}\times\mathcal{B}(\mathsf{E})$
with unique invariant probability measure $\mu$, we define
\[
{\rm var}(f,\Pi):=\lim_{n\rightarrow\infty}{\rm var}\left(n^{-\frac{1}{2}}\sum_{i=1}^{n}\left[f(X_{i})-\mu(f)\right]\right),
\]
where $(X_{n})_{n\geq1}$ is a Markov chain with Markov kernel $\Pi$
initialised with $X_{1}\sim\mu$.
\begin{remark}
One can immediately conclude from the construction of $\tilde{P}$
that ${\rm var}(f,P)\leq{\rm var}(f,\tilde{P})$ for any $f\in L^{2}(\mathsf{X},\pi)$,
using Peskun ordering \citep{peskun73,tierney98}, since $\tilde{\alpha}(x,y)\leq\alpha(x,y)$
for any $(x,y)\in\mathsf{X}^{2}$.
\end{remark}
For any $f\in L^{2}(\mathsf{E},\mu)$ we define the Dirichlet form
associated with a $\mu$-reversible Markov kernel $\Pi:\mathsf{E}\times\mathcal{B}(\mathsf{E})$
as
\[
\mathcal{E}_{\Pi}(f):=\frac{1}{2}\int\mu(\mathrm{d}x)\Pi(x,\mathrm{d}y)\left[f(x)-f(y)\right]^{2}.
\]
The (right) spectral gap of a generic $\mu$-reversible Markov kernel
has the following variational representation 
\[
{\rm Gap}\left(\Pi\right):=\inf_{f\in L_{0}^{2}(\mathsf{E},\mu)}\frac{\mathcal{E}_{\Pi}(f)}{\left\langle f,f\right\rangle _{\mu}}\quad.
\]
which leads to the following domination lemma:
\begin{lem}[{\citep[Lemma~34]{andrieuLee:13}}]
\label{lem:alv_dirichet_comparison}Let $\Pi_{1}$ and $\Pi_{2}$
be $\mu$-reversible Markov transition kernels of $\mu$-irreducible
and aperiodic Markov chains, and assume that there exists $\varrho>0$
such that for any $f\in L_{0}^{2}\bigl(\mathsf{E},\mu\bigr)$ 
\[
\mathcal{E}_{\Pi_{2}}\bigl(f\bigr)\geq\varrho\mathcal{E}_{\Pi_{1}}\bigl(f\bigr)\quad,
\]
then
\[
{\rm Gap}\bigl(\Pi_{2}\bigr)\geq\varrho{\rm Gap}\bigl(\Pi_{1}\bigr)
\]
and
\[
{\rm var}\bigl(f,\Pi_{2}\bigr)\leq(\varrho^{-1}-1){\rm var}_{\mu}(f)+\varrho^{-1}{\rm var}\left(f,\Pi_{1}\right)\quad f\in L_{0}^{2}(\mathsf{E},\mu).
\]

\end{lem}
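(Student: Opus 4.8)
The plan is to treat the two assertions separately. The spectral gap bound is immediate: since by hypothesis $\mathcal E_{\Pi_2}(f)\geq\varrho\,\mathcal E_{\Pi_1}(f)$ for every $f\in L_{0}^{2}(\mathsf E,\mu)$, dividing by $\langle f,f\rangle_\mu$ and taking the infimum over nonzero $f\in L_{0}^{2}(\mathsf E,\mu)$ in the variational representation of ${\rm Gap}$ gives ${\rm Gap}(\Pi_2)\geq\varrho\,{\rm Gap}(\Pi_1)$.

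For the variance bound, the ingredient I would establish first is a companion variational representation of the asymptotic variance in terms of the Dirichlet form: for any $\mu$-reversible kernel $\Pi$ of a $\mu$-irreducible aperiodic chain and any $f\in L_{0}^{2}(\mathsf E,\mu)$,
\[
{\rm var}(f,\Pi)=2\sup_{g\in L_{0}^{2}(\mathsf E,\mu)}\bigl\{2\langle f,g\rangle_\mu-\mathcal E_{\Pi}(g)\bigr\}-\langle f,f\rangle_\mu,
\]
with both sides interpreted in $[0,+\infty]$. This follows from the spectral theorem applied to the self-adjoint contraction induced by $\Pi$ on $L_{0}^{2}(\mathsf E,\mu)$: writing $E_f$ for its spectral measure associated with $f$ (which carries no atom at $\lambda=1$ by irreducibility and aperiodicity), one has on the one hand the standard identity ${\rm var}(f,\Pi)=\int_{[-1,1)}\tfrac{1+\lambda}{1-\lambda}E_f(\mathrm d\lambda)=2\int_{[-1,1)}\tfrac{1}{1-\lambda}E_f(\mathrm d\lambda)-\langle f,f\rangle_\mu$, and on the other hand, using the identity $\mathcal E_{\Pi}(g)=\langle g,(I-\Pi)g\rangle_\mu$, a completion-of-squares argument identifies $\sup_g\{2\langle f,g\rangle_\mu-\mathcal E_{\Pi}(g)\}$ with $\langle f,(I-\Pi)^{-1}f\rangle_\mu=\int_{[-1,1)}\tfrac{1}{1-\lambda}E_f(\mathrm d\lambda)$, the case without a spectral gap being treated by testing against $g=(I-\Pi+\varepsilon I)^{-1}f$ and letting $\varepsilon\downarrow0$ by monotone convergence.

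Granted this identity, the deduction is short. Because $\mathcal E_{\Pi_2}\geq\varrho\,\mathcal E_{\Pi_1}$ pointwise, replacing $\mathcal E_{\Pi_2}$ by $\varrho\,\mathcal E_{\Pi_1}$ inside the supremum and then substituting $g=\varrho^{-1}h$ — legitimate since $\mathcal E_{\Pi_1}$ is a quadratic form, so that $\mathcal E_{\Pi_1}(\varrho^{-1}h)=\varrho^{-2}\mathcal E_{\Pi_1}(h)$ — gives
\[
{\rm var}(f,\Pi_2)\leq 2\sup_{g}\bigl\{2\langle f,g\rangle_\mu-\varrho\,\mathcal E_{\Pi_1}(g)\bigr\}-\langle f,f\rangle_\mu=2\varrho^{-1}\sup_{h}\bigl\{2\langle f,h\rangle_\mu-\mathcal E_{\Pi_1}(h)\bigr\}-\langle f,f\rangle_\mu,
\]
and applying the identity to $\Pi_1$ to rewrite $2\sup_h\{2\langle f,h\rangle_\mu-\mathcal E_{\Pi_1}(h)\}={\rm var}(f,\Pi_1)+\langle f,f\rangle_\mu$ yields ${\rm var}(f,\Pi_2)\leq\varrho^{-1}{\rm var}(f,\Pi_1)+(\varrho^{-1}-1)\langle f,f\rangle_\mu$. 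This is the claim, since $\langle f,f\rangle_\mu=\mu(f^2)={\rm var}_\mu(f)$ for $f\in L_{0}^{2}(\mathsf E,\mu)$, the inequality being vacuous whenever ${\rm var}(f,\Pi_1)=+\infty$.

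The step I expect to be the main obstacle is making the variational identity rigorous without assuming a positive spectral gap — that is, identifying the supremum with $\langle f,(I-\Pi)^{-1}f\rangle_\mu\in[0,+\infty]$ when $I-\Pi$ is not boundedly invertible — which the $\varepsilon$-regularisation together with monotone convergence resolves. The remaining points (that $\mathcal E_\Pi(g)=\langle g,(I-\Pi)g\rangle_\mu$, that the relevant quantities are unchanged whether $f$ is considered in $L^2(\mathsf E,\mu)$ or $L_{0}^{2}(\mathsf E,\mu)$, and the sign of the substitution $g\mapsto\varrho^{-1}h$) are routine.
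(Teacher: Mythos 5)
The paper does not prove this lemma at all: it is imported verbatim as Lemma~34 of \citet{andrieuLee:13}, so there is no in-paper argument to compare against. Your proof is correct and is essentially the standard one behind that cited result: the spectral-gap part is the one-line consequence of the variational characterisation, and the variance part rests on the variational/spectral representation ${\rm var}(f,\Pi)=2\sup_{g}\{2\langle f,g\rangle_\mu-\mathcal{E}_{\Pi}(g)\}-\langle f,f\rangle_\mu$ together with the rescaling $g=\varrho^{-1}h$, which is exactly the Caracciolo--Pelissetto--Sokal / Tierney-style argument used in the reference. Your identification of the delicate step is also apposite; the $\varepsilon$-regularisation $g_\varepsilon=(I-\Pi+\varepsilon I)^{-1}f$ with monotone convergence handles the absence of a spectral gap, and the matching upper bound $2\langle f,g\rangle_\mu-\mathcal{E}_{\Pi}(g)\leq\langle f,(I-\Pi)^{-1}f\rangle_\mu$ follows from Cauchy--Schwarz after writing $\langle f,g\rangle_\mu=\langle (I-\Pi)^{-1/2}f,(I-\Pi)^{1/2}g\rangle_\mu$ (you could make that half of the ``completion of squares'' slightly more explicit, since it is the direction that actually needs $f$ in the domain of $(I-\Pi)^{-1/2}$, the supremum being $+\infty$ otherwise, which is consistent with the stated inequality).
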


We will need the following condition in the sequel:
\begin{condition}
\label{cond:rho_klower}Defining $A:=\{(x,y)\in\mathsf{X}^{2}:r(x,y)\geq1\}$, there exists a $c$ such that\\
\centerline{$\inf_{(x,y)\in A}\min_{k\in\{1,\ldots,d\}}\rho_{k}(x,y)\geq c$.}\end{condition}
\begin{prop}
\label{prop:generic_da_result}Assume Condition~\ref{cond:rho_klower}.
Then Lemma~\ref{lem:alv_dirichet_comparison} holds with $\Pi_{1}=P$,
$\Pi_{2}=\tilde{P}$, $\mu=\pi$ and $\varrho=c^{d-1}$.\end{prop}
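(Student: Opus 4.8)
The goal is to verify the hypothesis of Lemma~\ref{lem:alv_dirichet_comparison} with $\Pi_1=P$, $\Pi_2=\tilde P$, $\mu=\pi$ and $\varrho=c^{d-1}$, i.e.\ to show that the Dirichlet forms satisfy $\mathcal{E}_{\tilde P}(f)\geq c^{d-1}\mathcal{E}_{P}(f)$ for all $f\in L_0^2(\mathsf{X},\pi)$. Since both Dirichlet forms have the same ``rejection'' contribution structure and differ only through the off-diagonal part of the kernel, the plan is to reduce the inequality on Dirichlet forms to a pointwise inequality between the acceptance probabilities. Concretely, for a Metropolis--Hastings-type kernel of the form in Lemma~\ref{lem:reversible_ratio_accept_probs} one has
\[
\mathcal{E}_{\Pi}(f)=\frac{1}{2}\int \pi(\mathrm{d}x)\,q(x,y)\,a(x,y)\,[f(x)-f(y)]^2\,\mathrm{d}y,
\]
because the diagonal term $\mathbf{1}_A(x)$ contributes nothing to $[f(x)-f(y)]^2$. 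Hence it suffices to show that $\tilde\alpha(x,y)\geq c^{d-1}\,\alpha(x,y)$ for $\pi\otimes q$-almost all $(x,y)$; integrating this pointwise bound against the nonnegative weight $\tfrac12\pi(\mathrm{d}x)q(x,y)[f(x)-f(y)]^2\,\mathrm{d}y$ immediately yields the Dirichlet-form comparison.

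So the crux is the pointwise bound $\tilde\alpha(x,y)\geq c^{d-1}\alpha(x,y)$. First I would dispose of the case $(x,y)\notin A$, i.e.\ $r(x,y)<1$. By the balance condition $\rho_k(y,x)=\rho_k(x,y)^{-1}$, we have $(y,x)\in A$, and one computes directly that $\tilde\alpha(x,y)=\prod_k\{1\wedge\rho_k(x,y)\}$ while $\tilde\alpha(y,x)=\prod_k\{1\wedge\rho_k(y,x)\}$; using $(1\wedge a)/(1\wedge a^{-1})=a$ as in the reversibility proof, $\tilde\alpha(x,y)=r(x,y)\,\tilde\alpha(y,x)$ and $\alpha(x,y)=r(x,y)=r(x,y)\,\alpha(y,x)$, so the bound at $(x,y)$ follows from the bound at $(y,x)\in A$. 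This symmetrisation step reduces everything to $(x,y)\in A$.

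For $(x,y)\in A$ we have $r(x,y)\geq1$, so $\alpha(x,y)=1$ and we must show $\tilde\alpha(x,y)=\prod_{k=1}^d\{1\wedge\rho_k(x,y)\}\geq c^{d-1}$. The subtle point is that some $\rho_k(x,y)$ may exceed $1$ (so that $1\wedge\rho_k=1$) while others may be below $1$; Condition~\ref{cond:rho_klower} only guarantees each $\rho_k(x,y)\geq c$ on $A$, giving the crude bound $c^d$. To sharpen $c^d$ to $c^{d-1}$, note that on $A$ the product of all the $\rho_k(x,y)$ equals $r(x,y)\geq1$, so not all factors can be strictly less than one: letting $S=\{k:\rho_k(x,y)<1\}$, we have $\prod_{k\in S}\rho_k(x,y)\geq \big(\prod_{k\notin S}\rho_k(x,y)\big)^{-1}\cdot 1$ is bounded below, and more simply $\prod_{k=1}^d\{1\wedge\rho_k(x,y)\}=\prod_{k\in S}\rho_k(x,y)$. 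If $|S|\le d-1$ then this is $\geq c^{|S|}\geq c^{d-1}$ (taking $c\le 1$ w.l.o.g., since otherwise $c^{d-1}\ge1$ is vacuous up to truncation). The only remaining case is $|S|=d$, i.e.\ every $\rho_k(x,y)<1$; but then $r(x,y)=\prod_k\rho_k(x,y)<1$, contradicting $(x,y)\in A$. Hence $|S|\le d-1$ always, and $\tilde\alpha(x,y)\ge c^{d-1}$ on $A$.

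I expect the main obstacle to be precisely this last combinatorial improvement from $c^d$ to the stated $c^{d-1}$ — recognising that on $A$ at least one factor is automatically $\geq1$ and hence does not incur a factor of $c$. Everything else (the Dirichlet-form reduction, the symmetrisation off $A$, and the invocation of Lemma~\ref{lem:alv_dirichet_comparison}, whose remaining irreducibility/aperiodicity hypotheses are inherited from those assumed of $P$) is routine. One should also remark that if $c\ge 1$ the statement is either trivial or one simply replaces $c$ by $\min(c,1)$; I would state the argument assuming $c\in(0,1]$ without loss of generality.
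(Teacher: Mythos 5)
Your proof is correct and follows essentially the same route as the paper's: reduce the Dirichlet-form comparison to the pointwise bound $\tilde{\alpha}(x,y)\geq c^{d-1}\alpha(x,y)$, establish it on $A$ via the observation that at least one $\rho_{k}(x,y)\geq1$ when $r(x,y)\geq1$ (so at most $d-1$ factors contribute a $c$), and transfer it off $A$ using the detailed-balance identity $\tilde{\alpha}(x,y)=r(x,y)\,\tilde{\alpha}(y,x)$. Your explicit remark that one may take $c\in(0,1]$ without loss of generality is a minor point the paper leaves implicit.
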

\begin{proof}
Since $r(x,y)\geq1$, we have $\alpha(x,y)=1$. On the other hand, for $(x,y)\in A$
\begin{eqnarray*}
\tilde{\alpha}(x,y) & = & \prod_{k=1}^{d}1\wedge\rho_{k}(x,y)\\
 & = & \prod_{k:\rho_{k}(x,y)<1}\rho_{k}(x,y)
 \geq c^{|\{k:\rho_{k}(x,y)<1\}|}\geq c^{d-1},
\end{eqnarray*}
since at least one $\rho_{k}(x,y)\geq1$ whenever $r(x,y)\geq1$.

From Lemma~\ref{lem:reversible_ratio_accept_probs}, when $(x,y)\in A$, we have 
$$\tilde{\alpha}(y,x)=\tilde{\alpha}(x,y)/r(x,y)\geq c^{d-1}\alpha(x,y)/r(x,y)=c^{d-1}\alpha(y,x)$$
and thus $\tilde{\alpha}(x,y)\geq c^{d-1}\alpha(x,y)$ for any $(x,y)\in\mathsf{X}^{2}$.
It follows that
\begin{eqnarray*}
\mathcal{E}_{\tilde{P}}(f) & = & \int_{\mathsf{X}}\pi({\rm d}x)\tilde{P}(x,{\rm d}y)\left(f(x)-f(y)\right)^{2}\\
 & = & \int_{\mathsf{X}}\pi({\rm d}x)P(x,{\rm d}y)\frac{\tilde{\alpha}(x,y)}{\alpha(x,y)}\left(f(x)-f(y)\right)^{2}\\
 & \geq & c^{d-1}\int_{\mathsf{X}}\pi({\rm d}x)P(x,{\rm d}y)\left(f(x)-f(y)\right)^{2}
   = c^{d-1}\mathcal{E}_{P}(f),
\end{eqnarray*}
and we conclude.
\end{proof}
The implication of this result is that, if $P$ admits a right spectral
gap, then so does $\tilde{P}$, whenever Condition~\ref{cond:rho_klower}
holds. Furthermore, and irrespective of whether or not $P$ admits a right
spectral gap, quantitative bounds on the asymptotic variance of MCMC
estimates using $(\tilde{X}_{n})_{n\geq1}$ in relation to those using
$(X_{n})_{n\geq1}$ are available.

\subsection{Modification of a given factorisation} \label{sec:al_modification}

The easiest way to use the above result is to modify any candidate factorisation.
Given a factorisation of the function $r$
\[
r(x,y)=\prod_{k=1}^{d}\tilde{\rho}_{k}(x,y)\,,
\]
satisfying the balance condition,
we can define a sequence of functions $\rho_{k}$ such that both $r(x,y)=\prod_{k=1}^{d}\rho_{k}(x,y)$
and Condition~\ref{cond:rho_klower} holds. To that effect, take an arbitrary $c\in(0,1]$ and define $b:=c^{\frac{1}{d-1}}$.
Then, if we set
\[
\rho_{k}(x,y):=\min\left\{ \frac{1}{b},\max\left\{ b,\tilde{\rho}_{k}(x,y)\right\} \right\} ,\quad k\in\{1,\ldots,d-1\},
\]
it then follows that one must define 
\[
\rho_{d}(x,y):=\frac{r(x,y)}{\prod_{k=1}^{d-1}\rho_{k}(x,y)}.
\]
From this modification, we deduce the following result: 
\begin{prop}\label{prop:better}
Using this scheme, Lemma~\ref{lem:alv_dirichet_comparison} holds
with $\Pi_{1}=P$, $\Pi_{2}=\tilde{P}$, $\mu=\pi$ and $\varrho=c^{2}$.\end{prop}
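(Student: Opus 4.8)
The plan is to follow the proof of Proposition~\ref{prop:generic_da_result} line by line, replacing only the lower bound on $\tilde{\alpha}$ over $A=\{(x,y)\in\mathsf{X}^{2}:r(x,y)\geq1\}$ by a sharper, $d$-free estimate that exploits the explicit clamping. So first I would fix $(x,y)\in A$, where $\alpha(x,y)=1$, and split the product $\tilde{\alpha}(x,y)=\prod_{k=1}^{d}\{1\wedge\rho_{k}(x,y)\}$ into the $d-1$ clamped factors and the residual factor $\rho_{d}$. By construction $\rho_{k}(x,y)\in[b,1/b]$ for $k\in\{1,\dots,d-1\}$, hence $1\wedge\rho_{k}(x,y)\geq b$ (whether or not $\rho_{k}(x,y)<1$), and therefore $\prod_{k=1}^{d-1}\{1\wedge\rho_{k}(x,y)\}\geq b^{d-1}=c$.

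The one genuinely new point is the control of $\rho_{d}$, which is \emph{not} clamped. Here I would use that $\rho_{d}(x,y)=r(x,y)\big/\prod_{k=1}^{d-1}\rho_{k}(x,y)$ together with $r(x,y)\geq1$ on $A$ and $\prod_{k=1}^{d-1}\rho_{k}(x,y)\leq(1/b)^{d-1}=1/c$, which gives $\rho_{d}(x,y)\geq c$ and hence $1\wedge\rho_{d}(x,y)\geq c$. Multiplying the two estimates yields $\tilde{\alpha}(x,y)\geq c^{2}=c^{2}\alpha(x,y)$ for every $(x,y)\in A$. I would also point out that this is exactly where one gains over Proposition~\ref{prop:generic_da_result}: Condition~\ref{cond:rho_klower} does hold for this factorisation, but only with constant $c$ (since $\rho_{d}$ can be as small as $c$), so invoking Proposition~\ref{prop:generic_da_result} would only give $\varrho=c^{d-1}$; bounding $\tilde{\alpha}$ directly avoids paying a factor $b$ for each of the first $d-1$ terms.

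The remainder is then identical to the proof of Proposition~\ref{prop:generic_da_result}. By Lemma~\ref{lem:reversible_ratio_accept_probs} applied to $\tilde{P}$, for $(x,y)\in A$ we have $\tilde{\alpha}(y,x)=\tilde{\alpha}(x,y)/r(x,y)\geq c^{2}/r(x,y)=c^{2}\alpha(y,x)$, and since for any pair $(x,y)$ either $(x,y)\in A$ or $(y,x)\in A$, it follows that $\tilde{\alpha}(x,y)\geq c^{2}\alpha(x,y)$ throughout $\mathsf{X}^{2}$. Writing $\tilde{P}(x,\mathrm{d}y)=\{\tilde{\alpha}(x,y)/\alpha(x,y)\}\,P(x,\mathrm{d}y)$ off the diagonal (the diagonal not contributing to a Dirichlet form) then gives $\mathcal{E}_{\tilde{P}}(f)\geq c^{2}\mathcal{E}_{P}(f)$ for all $f\in L_{0}^{2}(\mathsf{X},\pi)$, which is precisely the hypothesis of Lemma~\ref{lem:alv_dirichet_comparison} with $\varrho=c^{2}$; $\pi$-irreducibility and aperiodicity of $\tilde{P}$ are inherited from $P$ because $\tilde{\alpha}>0$. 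I do not expect a real obstacle here: the only delicate point is the unclamped last factor $\rho_{d}$, which is handled by the elementary bound above.
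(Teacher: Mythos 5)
Your proof is correct and follows essentially the same route as the paper's: bound the $d-1$ clamped factors below by $b^{d-1}=c$, use $\rho_{d}=r/\prod_{k=1}^{d-1}\rho_{k}\geq c\,r\geq c$ on $A$ to get $\tilde{\alpha}\geq c^{2}$ there, and then conclude exactly as in Proposition~\ref{prop:generic_da_result}. Your added remark on why the generic bound $c^{d-1}$ is avoided is a correct and worthwhile clarification, but it does not change the argument.
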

\begin{proof}
We note that $\inf_{(x,y)\in\mathsf{X}^{2}}\prod_{k=1}^{d-1}1\wedge\rho_{k}(x,y)\geq b^{d-1}=c$
and that
\[
\tilde{\rho}_{d}(x,y)=\frac{r(x,y)}{\prod_{k=1}^{d-1}\rho_{k}(x,y)}\geq b^{d-1}r(x,y)=cr(x,y).
\]
With $A:=\{(x,y)\in\mathsf{X}^{2}:r(x,y)\geq1\}$, it follows that
$\inf_{(x,y)\in A}\tilde{\rho}_{d}(x,y)\geq c$, and so $\inf_{(x,y)\in A}\tilde{\alpha}(x,y)\geq c^{2}$.
We conclude along the same line as in the proof of Proposition~\ref{prop:generic_da_result}.
\end{proof}
While this modification ensures that one can take $\varrho=c^{2}$
in Proposition~\ref{prop:generic_da_result}, it is too general to
suggest that using $\tilde{P}$ can be more computationally efficient
than using $P$ when the cost of evaluating each $\rho_{k}$ is taken into
account. Indeed, Proposition \ref{prop:better} holds when the functions $\tilde{\rho}_{k}$
are chosen completely arbitrarily. Of course in practice, one should
choose $\tilde{\rho}_{k}$ and hence $\rho_{k}$ so that they are
in some sense in agreement with $r$.

We will show in the next example that a certain class of $\tilde{\rho}_{k}$'s
are beneficial, namely those which correspond to Metropolis--Hastings acceptance
ratios with ``flattened'' surrogate target densities. On the other
hand, it is far from difficult to come up with surrogate target densities
for which unmodified use of $\tilde{\rho}_{k}$ can lead to disastrous
performance.

\subsection{Example: unmodified surrogate targets}

One common usage \citep{christen:fox:2005} of Delayed Acceptance is to substitute a surrogate
target $\bar{\pi}$ for $\pi$ in $\rho_{1}(x,y)$. We consider the
case $d=2$ and a random walk proposal to examine Condition~\ref{cond:rho_klower}
in this context. Here we have $q(x,y)=q(y,x)$ and so 
\[
\alpha(x,y)=1\wedge\frac{\pi(y)}{\pi(x)},
\]
while
\[
\rho_{1}(x,y)=1\wedge\frac{\bar{\pi}(y)}{\bar{\pi}(x)},\quad\rho_{2}(x,y)=1\wedge\frac{\pi(y)\bar{\pi}(x)}{\pi(x)\bar{\pi}(y)}.
\]

Considering $(x,y)\in A=\{(x,y)\in\mathsf{X}^{2}:r(x,y)\geq1\}$ we
require $c>0$ satisfying simultaneously
\[
\frac{\bar{\pi}(y)}{\bar{\pi}(x)}\geq c,\qquad\frac{\pi(y)\bar{\pi}(x)}{\pi(x)\bar{\pi}(y)}\geq c.
\]

The first of these says that $\bar{\pi}(y)/\bar{\pi}(x)$ cannot be too small when $\pi(y)\geq\pi(x)$. The second says
that $\bar{\pi}(y)/\bar{\pi}(x)$ should not be a large multiple of $\pi(y)/\pi(x)$. There are a large variety of choices
of $\bar{\pi}$ that allow one to take $c=1$.  For example, $\bar{\pi}(x)=\pi(x)+C$ for some constant $C\geq0$ and
$\bar{\pi}(x)\propto\pi(x)^{\beta}$ for some $\beta\in[0,1]$.  Note that $\beta=0$ corresponds to $\bar{\pi}$ being a
constant function and $\beta=1$ corresponds to $\bar{\pi}\propto\pi$. In between, one can think of $\bar{\pi}$ as being
a flattened version of $\pi$.

\subsection{Counter-example: failure to reproduce geometric ergodicity}\label{sec:countreX}

Consider the case $\pi(x)=\mathcal{N}(x;0,1)$ and $\bar{\pi}(x)=\mathcal{N}(x;0,\sigma^{2})$
with $Q(x,\cdot)$ a normal distribution with mean $x$ and fixed
variance for each $x\in\mathbb{R}$. Here we have

\[
\rho_{1}(x,y)=\exp\left\{ \frac{(x-y)(x+y)}{2\sigma^{2}}\right\} ,\quad\rho_{2}(x,y)=\exp\left\{ \frac{(1-\sigma^{2})(y-x)(y+x)}{2\sigma^{2}}\right\} .
\]
\citet{mengersen:tweedie:1996} showed that a random-walk Metropolis--Hastings chain for targets with super-exponential
tails is geometrically ergodic. We now exploit this result to derive that, if $\sigma^{2}<1$, then the unmodified
delayed acceptance Markov chain is not geometrically ergodic.
\begin{prop}
The unmodified Delayed Acceptance Markov chain using the factorisation into $\rho_{1}$ and
$\rho_{2}$ as above is not geometrically ergodic when $\sigma<1$.\end{prop}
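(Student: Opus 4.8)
The plan is to show that the Delayed Acceptance chain has, for some starting points, an acceptance probability that decays too slowly (or too fast) to allow geometric drift, by analysing the behaviour of $\tilde\alpha(x,y)$ as $x\to+\infty$. The key structural observation is that when $\sigma^2<1$ the factor $\rho_2(x,y)=\exp\{(1-\sigma^2)(y-x)(y+x)/2\sigma^2\}$ has the \emph{wrong sign} in its exponent relative to the true ratio: for a move from a large positive $x$ to $y=x+z$ with $z>0$ (i.e. a move further into the tail), we have $\rho_1(x,y)$ very small but $\rho_2(x,y)$ very large, whereas for an inward move $z<0$ the reverse holds. First I would fix the random-walk proposal variance and, for $x$ large, split the proposal increment $Z=Y-x$ into $Z\ge 0$ and $Z<0$. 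For $Z<0$ (inward moves, the ones a geometrically ergodic chain must rely on), $\rho_1(x,x+Z)=\exp\{-Z(2x+Z)/2\sigma^2\}\ge 1$ but $\rho_2(x,x+Z)=\exp\{(1-\sigma^2)Z(2x+Z)/2\sigma^2\}$, which for fixed $Z<0$ tends to $0$ as $x\to\infty$. Hence $\tilde\alpha(x,x+Z)=\rho_2(x,x+Z)\to 0$ for every fixed inward increment.

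The next step is to convert this pointwise statement into a statement about the expected acceptance probability from $x$, and then into failure of geometric ergodicity. I would write the probability of accepting \emph{any} move from $x$ as $a(x):=\int q(x,y)\tilde\alpha(x,y)\,\mathrm{d}y$ and show $a(x)\to 0$ as $x\to\infty$: the contribution from $Z<0$ vanishes by dominated convergence (the integrand is bounded by the Gaussian proposal density and goes to $0$ pointwise), and the contribution from $Z\ge 0$ also vanishes because there $\rho_1(x,x+Z)=\exp\{-Z(2x+Z)/2\sigma^2\}\to 0$ pointwise, so $\tilde\alpha(x,x+Z)\le \rho_1(x,x+Z)\to 0$ as well. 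Therefore $a(x)\to 0$. A standard result (e.g.\ \citet{mengersen:tweedie:1996} or \citet{robert:casella:2004}) states that if the probability of accepting a move from $x$ tends to zero as $|x|\to\infty$, the Metropolis--Hastings-type chain cannot be geometrically ergodic, because one can then find, for any $\kappa<1$, a sub-level set of any candidate drift function outside of which the one-step expected drift $P V/V$ (here $\tilde P V/V$) exceeds $\kappa$; equivalently the chain gets ``stuck'' near $+\infty$ for geometrically-distributed-with-heavy-tail sojourn times. I would invoke that criterion directly with $\tilde P$ in place of $P$.

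To make the argument fully rigorous I would phrase it through the contrapositive of the geometric drift condition: suppose $\tilde P$ were geometrically ergodic; then it would admit a drift function $V\ge 1$ with $\tilde P V\le \lambda V + b\mathbf{1}_C$ for some $\lambda<1$, $b<\infty$ and small (hence, here, bounded) set $C$. Since $\tilde\alpha(x,x)=\int q(x,y)(1-\tilde\alpha(x,y))\,\mathrm dy = 1-a(x)\to 1$, the holding probability at $x$ tends to $1$, so $\tilde P V(x)\ge (1-a(x))V(x)+\text{(nonnegative)}$, giving $\tilde P V(x)/V(x)\ge 1-a(x)\to 1$, which for $x$ large enough outside $C$ contradicts $\tilde P V(x)/V(x)\le \lambda + b\mathbf 1_C(x)=\lambda<1$. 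This is the cleanest route and avoids any delicate tail computation. The main obstacle is the interchange of limit and integral in showing $a(x)\to0$ uniformly enough to feed the drift argument — but since the proposal density provides a fixed integrable dominating function and the integrand converges to $0$ pointwise for \emph{every} increment $Z$, dominated convergence applies with no difficulty; the only mild care needed is to treat the $Z\ge0$ and $Z<0$ regimes separately, as above, because the vanishing of $\tilde\alpha$ comes from $\rho_1$ on one side and from $\rho_2$ on the other.
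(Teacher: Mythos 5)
Your argument is correct and follows essentially the same route as the paper: you identify the same antagonistic behaviour of the two factors ($\rho_1$ killing outward moves, $\rho_2$ killing inward moves when $\sigma^2<1$), deduce that the probability of leaving $x$ vanishes as $x\to\infty$, and invoke the standard criterion that a chain whose holding probability tends to $1$ at infinity cannot be geometrically ergodic — exactly \citet[Theorem~5.1]{roberts:tweedie:1996}, which the paper cites. The only differences are cosmetic: you establish $\tilde P(x,\{x\}^{\complement})\to 0$ by dominated convergence where the paper uses explicit $\tau/3$ bounds on a suitable set $A$, and your expression ``$\tilde\alpha(x,x)=\int q(x,y)(1-\tilde\alpha(x,y))\,\mathrm{d}y$'' should read $\tilde P(x,\{x\})$.
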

Intuitively, when $x$ is large $P(x,(-\infty,x))\approx\frac{1}{2}$ but $\lim_{x\rightarrow\infty}\tilde{P}(x,\{x\})=1$ because $\rho_{1}(x,y)$
takes on smaller and smaller values for $y>x$ and $\rho_{2}(x,y)$ takes on smaller and smaller values for $y<x$.
\begin{proof}
From \citet[Theorem~5.1]{roberts:tweedie:1996}, it suffices to show that $\pi{\rm-ess}\inf_{x\in\mathsf{X}}\tilde{P}(x,\{x\}^{\complement})=0$,
i.e. that for any $\tau\in(0,1)$ we can find $A\subseteq\mathsf{X}$
such that $\pi(A)>0$ and $\sup_{x\in A}\tilde{P}(x,\{x\}^{\complement})\leq\tau$.
Let $B_{s}(z)$ denote the ball of radius $s$ around $z$. Given
$\tau\in(0,1)$, we define 
\[
r:=\sup\{s>0:Q(x,B_{s}(x))<\tau/3\},
\]
and 
\[
A:=\left\{ x:x>\frac{r}{2}-\frac{\sigma^{2}\log(\tau/3)}{r(1-\sigma^{2})}\right\} \bigcap\left\{ x:Q(x,\mathbb{R}_{-})<\frac{\tau}{3}\right\} .
\]
Then 
\begin{eqnarray*}
\tilde{P}(x,\{x\}^{\complement}) & = & \tilde{P}(x,B_{r}(x)\setminus\{x\})+\tilde{P}(x,B_{r}^{\complement}(x))\\
 & \leq & \frac{\tau}{3}+\int_{B_{r}^{\complement}(x)}Q(x,{\rm d}y)\tilde{\alpha}(x,y)\\
 & \leq & \frac{2\tau}{3}+\int_{B_{r}^{\complement}(x)\cap\mathbb{R}_{+}}Q(x,{\rm d}y)\tilde{\alpha}(x,y)\\
 & \leq & \frac{2\tau}{3}+\sup_{y\in B_{r}^{\complement}(x)\cap\mathbb{R}_{+}}\tilde{\alpha}(x,y)\\
 & = & \frac{2\tau}{3}+\sup_{y\in B_{r}^{\complement}(x)\cap\mathbb{R}_{+}}\left[1\wedge\rho_{1}(x,y)\right]\left[1\wedge\rho_{2}(x,y)\right].
\end{eqnarray*}
Now let $x\in A$, $y\in B_{r}^{\complement}(x)\cap\mathbb{R}_{+}$
and assume $y<x$. It follows that $\rho_{2}(x,y)$ attains its maximum
when $y=x-r$ and therefore
\begin{eqnarray*}
\rho_{2}(x,y) & \leq & \exp\left\{ \frac{(1-\sigma^{2})r(r-2x)}{2\sigma^{2}}\right\} \\
 & \leq & \exp\left\{ \frac{(1-\sigma^{2})r}{2\sigma^{2}}\left[\frac{2\sigma^{2}\log(\tau/3)}{r(1-\sigma^{2})}\right]\right\} =\frac{\tau}{3}.
\end{eqnarray*}
Similarly, let $x\in A$, $y\in B_{r}^{\complement}(x)\cap\mathbb{R}_{+}$
and assume $y>x$. It follows that $\rho_{1}(x,y)$ attains its maximum
when $y=x+r$ and therefore
\begin{eqnarray*}
\rho_{1}(x,y) & \leq & \exp\left\{ -\frac{r(2x+r)}{2\sigma^{2}}\right\} \\
 & \leq & \exp\left\{ -\frac{r}{2\sigma^{2}}\left(2r-\frac{2\sigma^{2}\log(\tau/3)}{r(1-\sigma^{2})}\right)\right\} \\
 & \leq & \exp\left\{ \frac{r}{2\sigma^{2}}\left(\frac{2\sigma^{2}\log(\tau/3)}{r(1-\sigma^{2})}\right)\right\} \leq\frac{\tau}{3},
\end{eqnarray*}
since $\log(\tau/3)<0$ and $\sigma^{2}<1$. Therefore,
\[
\sup_{y\in B_{r}^{\complement}(x)\cap\mathbb{R}_{+}}\left[1\wedge\rho_{1}(x,y)\right]\left[1\wedge\rho_{2}(x,y)\right]\leq\frac{\tau}{3}
\]
so $\tilde{P}(x,\{x\}^{\complement})\leq\tau$ and we conclude.
\end{proof}
The same argument can be made for much more general targets and proposals, albeit at the expense of brevity and
clarity. We refrain from such a generalisation  as our purpose here is to demonstrate that the DA chain may fail to inherit geometric
ergodicity and that the simple proposed modification of the Delayed Acceptance kernel provided in Section
\ref{sec:al_modification} allows one to avoid this.

\section{Optimisation} \label{sec:optim}

When considering Markov Chain Monte Carlo methods in practice, their efficiency as measured by mixing properties and
computational cost is a fundamental issue. This section addresses both perspectives in connection with Delayed
Acceptance. Section \ref{sec:optim_proposal} examines the proposal distribution and derives its optimal scaling from standard random--walk Metropolis--Hastings theory. Then Section \ref{sec:optim_rank} covers the ranking of the factors $\rho_i$, which drives the total
computational cost of the procedure.

\subsection{Optimising the proposal mechanism} \label{sec:optim_proposal}

The explorative performances of a random--walk MCMC are strongly dependent on its proposal distribution. As exemplified in \cite{gilksetal97}, finding the optimal scale parameter does lead to efficient `jumps' in the state space and the overall acceptance rate of the chain is directly connected to the average jump distance and to the asymptotic variance of ergodic averages. This provides practitioners with an approach to `auto-tune' the resulting random--walk MCMC algorithm. Extending this calibration to the Delayed Acceptance scheme is equally important, on its own ground towards finding a reasonable scaling for the proposal distribution and to avoid comparisons with the standard Metropolis--Hastings version.

The original framework of \cite{gilksetal97} is cantered on estimating a collection of expected functionals, say $g$,
where a plausible criterion for the performances of the MCMC is the minimisation of the stationary integrated
auto-correlation time (ACT) of the Markov chain, defined as
$$ \tau_g = 1 + 2\sum \limits_{i=1}^\infty \mathbb C\text{or}( g(X_0), g(X_i) ) $$
where the index $g$ stresses the dependence on the considered functional, 
which is connected to the asymptotic variance through $\varm(P,g) = \tau_g \times \varm_\pi(g)$ whenever the chain is
$\phi$-irreducible, aperiodic, and reversible, $\varm_\pi(g)$ is finite and $g \in L^2(\pi)$.

Research on this optimisation focus on two main cases:
\begin{itemize}
\item Consider the limit in the dimension of the target distribution toward $\infty$, where \cite{gilksetal97} gave
conditions under which each marginal chain converges toward a Langevin diffusion. Maximising the speed of that
diffusion, say $h(\ell)$ where $\ell$ is a parameter of the scale of the proposal, implies a minimisation of the ACT 
and also that $\tau$ is \emph{free} from the dependence on the functional, defining thus an independent measure of efficiency for the algorithm;
\item \cite{sherlock:09} focus on unimodal elliptically symmetric targets and show that a proxy for the ACT in finite
dimensions is the Expected Square Jumping Distance (ESJD), defined as 
$$ \mathbb E \left[ \| X' - X \|_\beta^2 \right] = \mathbb E \left[ \sum \limits_{i=1}^d \frac{1}{\beta_i^2} (X'_i - X)^2 \right]$$
where $X$ and $X'$ are two successive points in the chain and $\|\cdot\|_\beta$ represent the norm on the
principal axes of the ellipse rescaled by the coefficients $\beta_i$ so that every direction contributes equally.
\end{itemize}

An interesting result in \cite{sherlock:09} is that, as $d \rightarrow \infty$, the ESJD on one marginal component of
the chain converges with the same speed as the diffusion process described in \cite{gilksetal97}. 
These authors furthermore show the asymptotic result holds for rather small dimension, roughly starting from $d=5$.

Moreover, when considering efficiency for Delayed Acceptance, which is a technique tailored on costly computations, we need to focus on the execution time of the algorithm as well.
We then proceed to define our measure of efficiency as 
\begin{equation}\label{eq:Eff}
\mathbf{Eff} := \text{ESJD}\big/{\text{cost per iteration}}
\end{equation}
similarly to \cite{sherlock:13} for Pseudo-Marginal MCMC.

Due to the complex acceptance ratio in Delayed Acceptance, an extension of the previous results requires rather
stringent assumptions, albeit providing a proper guideline in practice. Section \ref{sec:examples} will further
demonstrate optimality extends beyond those conditions. Note that our assumptions are quite standard in the literature
on the subject. \\

\noindent (H1) We assume for simplicity's sake that the Delayed Acceptance procedure operates on two factors only, i.e., that $ r(x,y)
= \rho_1(x,y) \times \rho_2(x,y) $.
The acceptance probability of the scheme is thus $$\tilde \alpha(x,y) = \prod \limits_{i=1}^2 (1 \land \rho_i(x,y)) 
.$$
We also consider the ideal setting where a computationally cheap approximation $\tilde f(\cdot)$ is available for
$\pi(\cdot)$ and precise enough so that $\rho_2(x,y) = r(x,y) / \rho_1(x,y) = {\pi(y)}\big/{\pi(x)} \times {\tilde
f(x)}\big/{\tilde f(y)} = 1$.\\

\noindent (H2) We further assume that the target distribution satisfies (A1) and (A2) in \cite{gilksetal97}, which are regularity
conditions on $\pi$ and its first and second derivatives, and that $\pi(x) = \prod \limits_{i=1}^n f(x_i)$. 

\noindent (H3) We consider a random walk proposal $y = x + \sqrt{ \ell^2 / d } \, Z$, where $Z \sim \mathcal N(0, I_d)$. Note that Gaussianity can be easily relaxed to distributions with finite fourth moment and similar results are available for more
heavy-tailed distributions \citep{pneal:11}.\\

\noindent (H4) Finally we assume that the cost of computing $\tilde f(\cdot)$, say $c$, is proportional to the cost of computing $\pi(\cdot)$, named $C$, 
with $c = \delta C$.\\

Normalising costs by setting $C=1$, the average total cost per iteration of the Delayed Acceptance chain is 
$ \delta + \mathbb E \left[ \tilde \alpha \right]$ and the efficiency of the proposed method under the above conditions is
$$ \mathbf{Eff} (\delta,\ell) = \frac{ESJD}{\delta + \mathbb E \left[ \tilde \alpha \right]}$$

\begin{lem} \label{lem:optim} 
Under the above conditions (H1--H4) on the target $\pi(x)$, on the proposal $q(x,y)$ and on the factorised acceptance probability $\tilde \alpha(x,y) = \prod \limits_{i=1}^2 (1 \land \rho_i(x,y))$ we have that
$$ \tilde \alpha(x,y) =  \left(1 \land \rho_1(x,y) \right)$$
 and that as $d \rightarrow \infty$
 $$ \mathbf{Eff} (\delta,\ell) = \frac{h(\ell)}{\delta + \mathbb E \left[ \tilde \alpha \right]} = \frac{2 \ell^2
\Phi(-\frac{\ell \sqrt{I}}{2})}{\delta + 2\Phi(-\frac{\ell \sqrt{I}}{2})}$$  
 $$a(l) = \mathbb E \left[ \tilde \alpha \right] = 2 \Phi(-\frac{\ell \sqrt{I}}{2})$$
 where $I := \mathbb E \left[ \left( \frac{ ( \, \pi(x) \,)'}{ \pi(x)} \right)^2 \right]$ as defined in \cite{gilksetal97}.

\end{lem}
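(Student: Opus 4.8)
The strategy is to observe that, under the idealised conditions (H1)--(H4), the Delayed Acceptance chain \emph{is} the random-walk Metropolis--Hastings chain of \citet{gilksetal97}, so that both limits can be read off from their optimal-scaling theory; the only genuinely new ingredient is the accounting of the two-stage computational cost. Concretely I would split the argument into (i) the elementary collapse $\tilde\alpha=1\wedge\rho_{1}$, (ii) the cost bookkeeping, and (iii) the invocation of the diffusion limit for the acceptance rate and the ESJD.

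For (i): by the balance construction one has $\rho_{2}(x,y)=r(x,y)/\rho_{1}(x,y)$, and the idealised assumption states that this equals $1$ identically; hence $1\wedge\rho_{2}(x,y)=1$ and $\tilde{\alpha}(x,y)=\{1\wedge\rho_{1}(x,y)\}\{1\wedge\rho_{2}(x,y)\}=1\wedge\rho_{1}(x,y)$, which is the first displayed claim. Moreover $\rho_{1}=r/\rho_{2}=r$, so $\tilde{\alpha}=1\wedge r=\alpha$, i.e. $(\tilde{X}_{n})_{n\geq1}$ has exactly the transition kernel of a random-walk Metropolis--Hastings sampler targeting $\pi(x)=\prod_{i=1}^{n}f(x_{i})$ with proposal increment $\sqrt{\ell^{2}/d}\,Z$, $Z\sim\mathcal{N}(0,I_{d})$. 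For (ii): in Algorithm~\ref{alg:Delayed-acceptance} with $d=2$, the first stage always evaluates the surrogate $\tilde f$, at normalised cost $\delta$ by (H4), while the second stage additionally evaluates $\pi$ at normalised cost $1$ but is reached only when the first stage does not reject, an event of probability $\mathbb{E}[1\wedge\rho_{1}(X,Y)]=\mathbb{E}[\tilde{\alpha}]$ by (i). Hence the expected cost per iteration is $\delta+\mathbb{E}[\tilde{\alpha}]$, so $\mathbf{Eff}(\delta,\ell)=\mathrm{ESJD}/(\delta+\mathbb{E}[\tilde{\alpha}])$, and it remains only to identify $\mathbb{E}[\tilde{\alpha}]$ and $\mathrm{ESJD}$ as $d\to\infty$.

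For (iii): writing $\log r(X,Y)=\sum_{i=1}^{d}\{\log f(Y_{i})-\log f(X_{i})\}$ and Taylor-expanding each summand in $\sqrt{\ell^{2}/d}\,Z_{i}$, the regularity conditions (A1)--(A2) imported in (H2) together with the central limit theorem and the law of large numbers give, exactly as in \citet{gilksetal97}, $\log r(X,Y)\Rightarrow\mathcal{N}(-\ell^{2}I/2,\ell^{2}I)$ with $I=\mathbb{E}[((\log f)'(X_{1}))^{2}]=-\mathbb{E}[(\log f)''(X_{1})]$; the standard identity $\mathbb{E}[1\wedge e^{W}]=2\Phi(-\sigma/2)$ for $W\sim\mathcal{N}(-\sigma^{2}/2,\sigma^{2})$ then yields $a(\ell)=\mathbb{E}[\tilde{\alpha}]\to2\Phi(-\ell\sqrt{I}/2)$. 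For the numerator, the expected square jump of a single marginal component equals $(\ell^{2}/d)\,\mathbb{E}[Z_{1}^{2}\mathbf{1}(\text{accept})]$; since the acceptance event is driven by the remaining $d-1$ coordinates, $Z_{1}$ decouples from it asymptotically, so after the time-rescaling by $d$ that produces the Langevin limit the ESJD converges to $\ell^{2}a(\ell)=2\ell^{2}\Phi(-\ell\sqrt{I}/2)=h(\ell)$, in agreement with the diffusion speed of \citet{gilksetal97} and the ESJD-versus-speed equivalence of \citet{sherlock:09}. Substituting these two limits into $\mathbf{Eff}(\delta,\ell)$ gives the stated formula.

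The only delicate points are the weak-convergence and asymptotic-decoupling steps, which require uniform control of the Taylor remainders in $\log f(Y_{i})-\log f(X_{i})$ and the fact that a single coordinate is asymptotically negligible relative to the sum of $d$ of them; these are precisely the technical lemmas already established in \citet{gilksetal97} (with the Gaussian assumption on $Z$ relaxable to distributions with finite fourth moment via \citet{pneal:11}), so they can be invoked rather than reproved. The self-contained content of the lemma is thus the collapse $\tilde{\alpha}=1\wedge\rho_{1}$ and the cost accounting $\delta+\mathbb{E}[\tilde{\alpha}]$, after which the random-walk Metropolis--Hastings optimal-scaling limits apply verbatim.
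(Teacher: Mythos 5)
Your proof is correct and follows essentially the same route as the paper's: the collapse $\tilde\alpha=1\wedge\rho_1=\alpha$ under (H1), the cost accounting $\delta+\mathbb{E}[\tilde\alpha]$, and then the optimal-scaling limits of \citet{gilksetal97}, which the paper simply invokes as Theorem~1.1 where you sketch their derivation. The extra detail is harmless but not needed; the substance is identical.
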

\begin{proof}
It is easy to see that (H1) implies $ \tilde f(\cdot) = \pi(\cdot) $ and so $\rho_1(x,y) = r(x,y)$. Moreover, by definition,
$\rho_2(x,y) = r(x,y)/\rho_1(x,y) =1 $ and hence the second test is always accepted. The acceptance rate reduces then to
just the ratio $ \tilde f(y) / \tilde f(x) = \rho_1(x,y)$.

The second part of the lemma follows directly from Theorem 1.1 in \cite{gilksetal97}.
\end{proof}

Let us stress that almost all assumptions in the above Lemma can be relaxed and that performances are robust against
small deviances from those assumptions, as shown by the literature on standard Metropolis--Hastings. Obtaining
analytical results without such conditions, while possible, requires however a considerable mathematical effort.

We now state the main practical implication of Lemma \ref{lem:optim}.
\begin{prop}
If the conditions of Lemma \ref{lem:optim} holds, the optimal average acceptance rate $\alpha^*(\delta)$ is independent of $I$.
\end{prop}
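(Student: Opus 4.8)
The plan is to reduce the one–dimensional optimisation over the proposal scale $\ell$ to an optimisation over the rescaled quantity $k:=\ell\sqrt{I}$, under which the dependence on $I$ factors out of the efficiency criterion entirely. Starting from the expression supplied by Lemma~\ref{lem:optim},
\[
\mathbf{Eff}(\delta,\ell)=\frac{2\ell^{2}\,\Phi(-\tfrac{\ell\sqrt{I}}{2})}{\delta+2\,\Phi(-\tfrac{\ell\sqrt{I}}{2})},
\]
I would substitute $\ell=k/\sqrt{I}$, so that $\ell^{2}=k^{2}/I$, giving
\[
\mathbf{Eff}(\delta,\ell)=\frac{1}{I}\cdot\frac{2k^{2}\,\Phi(-k/2)}{\delta+2\,\Phi(-k/2)}=:\frac{1}{I}\,g(\delta,k).
\]
Since $I$ is a fixed strictly positive constant attached to the target and the map $\ell\mapsto\ell\sqrt{I}$ is a bijection of $(0,\infty)$ onto itself, maximising $\ell\mapsto\mathbf{Eff}(\delta,\ell)$ is equivalent to maximising $k\mapsto g(\delta,k)$, and the latter problem does not involve $I$.

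Next I would verify that $g(\delta,\cdot)$ attains a maximum on $(0,\infty)$, so that the notion of an optimal scale is well posed: $g(\delta,\cdot)$ is continuous and strictly positive on $(0,\infty)$, it vanishes as $k\to 0^{+}$ (the factor $k^{2}$ while the denominator tends to $\delta+1>0$), and it vanishes as $k\to\infty$ (the Gaussian tail $\Phi(-k/2)$ decays faster than $k^{2}$ grows, while the denominator stays bounded below by $\delta$). Hence $k^{*}(\delta):=\argmax_{k>0}g(\delta,k)$ is nonempty and, being defined through $g$ alone, depends only on $\delta$.

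Finally I would translate back: the optimal scale is $\ell^{*}(\delta)=k^{*}(\delta)/\sqrt{I}$, and substituting into the expression $a(\ell)=2\Phi(-\tfrac{\ell\sqrt{I}}{2})$ from Lemma~\ref{lem:optim} yields
\[
\alpha^{*}(\delta)=a\bigl(\ell^{*}(\delta)\bigr)=2\,\Phi\!\left(-\frac{k^{*}(\delta)}{2}\right),
\]
which is manifestly a function of $\delta$ alone. I do not anticipate a real obstacle here; the only points deserving a line are the bijectivity of the rescaling (immediate from $I>0$) and the existence of a maximiser. Uniqueness of $k^{*}(\delta)$ is not needed for the statement: even read set-valued, the image of the maximiser set under $k\mapsto 2\Phi(-k/2)$ is unchanged by $I$, so the optimal acceptance rate is in any case $I$-free.
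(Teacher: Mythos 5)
Your argument is correct and is essentially the paper's own: both proofs reparametrise the optimisation so that $I$ appears only as a multiplicative prefactor of the objective (you substitute $k=\ell\sqrt{I}$ to get $\mathbf{Eff}=I^{-1}g(\delta,k)$, while the paper substitutes $a=2\Phi(-\ell\sqrt{I}/2)$ to get $\mathbf{Eff}=\tfrac{4}{I}\bigl[\Phi^{-1}(a/2)^{2}a/(\delta+a)\bigr]$, and these two changes of variables are related by the monotone map $k\mapsto 2\Phi(-k/2)$). Your additional verification that a maximiser exists is a small refinement the paper omits, but the core idea is identical.
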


\begin{proof}
Consider  $\mathbf{Eff} (\delta,\ell)$ in terms of $(\delta,a(\ell))$: $$ a = g(\ell) = 2 \Phi \left(-\frac{\ell
\sqrt{I}}{2} \right) \quad ; \quad \ell = g^{-1}(a) = -\Phi^{-1}\left(\frac{a}{2}\right)\frac{2}{\sqrt{I}}$$
$$
\mathbf{Eff} (\delta,a) = \frac{ \frac{4}{I} \left[ \Phi^{-1} \left( \frac{a}{2} \right)^2 a \right] } {\delta + a} =  \frac{4}{I} \left\{ \frac{1}{\delta + a} \left[ \Phi^{-1} \left( \frac{a}{2} \right)^2 a \right] \right\}
$$
where we dropped the dependence on $\ell$ in $a$ for notation's sake.
It is now evident that to maximise $\mathbf{Eff} (\delta,a)$ in $a$ we only need maximise $\left\{
\frac{1}{\delta + a} \left[ \Phi^{-1} \left( \frac{a}{2} \right)^2 a \right] \right\}$, which is independent of $I$.
\end{proof}

\begin{figure}
\caption{Two top panels: behaviour of $\ell^*(\delta)$ and $\alpha^*(\delta)$ as the relative cost varies. Note that for
$\delta >> 1$ the optimal values converges towards the values computed for the standard Metropolis--Hastings (dashed in
red).  Two bottom panels: close--up of the interesting region for $0 < \delta < 1$.}
\label{fig:alpha_l_delta}
\includegraphics[scale=0.45]{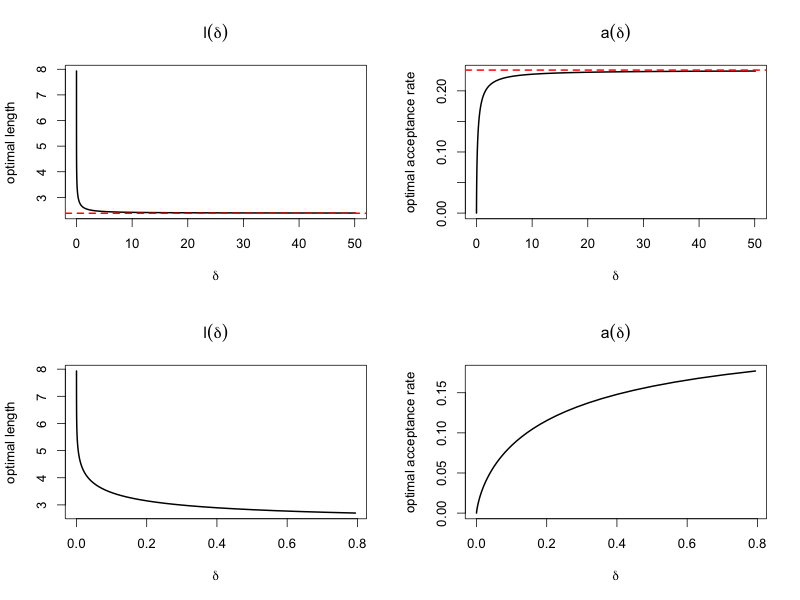}
\end{figure}

The optimal scale of the proposal $\ell^*(\delta)$ and the optimal acceptance rate $a^*(\delta)$ are thus given as
functions of $\delta$. In particular, as the relative cost of computing $\rho_1(x,y)$ with respect to $\rho_2(x,y)$
decreases, the proposed moves become bolder, in that $\ell^*$ increases and $a^*$ decreases, since rejecting costs
the algorithm  little in terms of time, while every accepted move results in an almost independent sample. On the
contrary when $\delta$ grows larger the chain rapidly approaches a Metropolis--Hastings behaviour, as it is no
longer convenient to reject early. Figure \ref{fig:alpha_l_delta} helps visualise the result. \\

\subsection{Ranking the Blocks}\label{sec:optim_rank}

As mentioned at the end of Section \ref{sec:zero_zero}, the order in which the factors $\rho_i(x,y)$ are tested has
a strong influence on the performance of the algorithm. Delayed Acceptance was first developed in
\cite{fox:nicholls:1997,christen:fox:2005} to speed up computations using a cheap approximation $\tilde \pi(\cdot)$ of
the target distribution $\pi(\cdot)$ as a first step before computing the actual, and costly, Metropolis--Hastings ratio
${\pi(y)}\big/{\pi(x)}$ only in the cases where the acceptance test based on the approximation $\tilde \pi$ was satisfied.
The main idea, namely to avoid the computation of the most costly parts as often as possible, remains relevant even for factorisations composed of more than two terms.

Consider an i.i.d. framework; the target (in $x$) is given by 
$$
\pi(x|Z) \propto p(x) \times L(Z|x) = p(x|Z) \times \prod \limits_{i=1}^n f(z_i|x) 
$$ 
where $Z=(z_1, \dots, z_n)$ is an i.i.d. sample from $f(z|x)$ and
$p(x)$ is the prior distribution for $x$, we can always consider the decomposition
\begin{equation}\label{eq:xi_factors} 
r(x,y) = \prod \limits_{i=1}^K \xi_i(x,y) \end{equation}
where each $\xi_i(x,y)$ is made of a small number of density ratio terms, with one including the prior and
proposal ratios. In the limit, it is feasible if not necessarily efficient to consider the case $K=n+1$ with
$$
\xi_i(x,y) = \frac{f(z_i|y)}{f(z_i|x)} \quad i = 1, \dots \,, 
n\ \text{ and }\ \xi_{n+1}(x,y) = \frac{p(y)q(y,x)}{p(x)q(x,y)}.
$$

Assuming the computing cost is comparable for all terms, a solution for optimising the order of these factors ranks the entries according to the success rates observed so far, starting with the least successful values. Alternatively, the factorisation can start with the ratio that has the highest variance, since it is the most likely to be rejected. (Note however that poor factorisations \eqref{eq:xi_factors} lead to very low acceptance rates, as for instance when picking only outliers in a given group of observations.) 
Lastly, we can rank factors by their correlation with the full Metropolis--Hastings ratio; taking the argument to the
limit, if the first factor has a perfect correlation with $r(\cdot,\cdot)$ then all the successive terms must be accepted and their order is hence of no interest.

 
This later setting is akin to considering the hypothetical optimal solution introduced in Section \ref{sec:optim_proposal} with only two terms in the decomposition. Let a small number of the best scoring terms be merged to form $\rho_1$ and let the remaining factors become $\rho_2$. $ \rho_1(x,y) = \tilde \pi(y) / \tilde \pi(x)$ is then highly correlated to $r(x,y)$, $ \rho_2(x,y) \sim 1 $ for every $(x,y)$ and hence $\tilde \pi(x)$ is a close approximation of the target, albeit probably flattened, which is exactly what we want (see Section \ref{sec:convergence}).
 
As all these features can be evaluated for each subsample while running a chain with acceptance ratio factored as in \eqref{eq:xi_factors}, an implementation based on this intuition is then to take
$$
\tilde \pi_{Z^*}(x) \propto p(x)^{m/n} \prod \limits_{i=1}^{m} f(z^*_i|x)
 \quad\text{ or }\quad \tilde \pi_{Z^*}(x) \propto \prod \limits_{i=1}^{m} f(z^*_i|x)
$$
with $m < n$, where $Z^* = (z^*_1, \dots, z^*_{m})$ is a subsample of $Z$.
At each iteration $t$ of the Markov chain we compute all the $\xi_i(x^{(t-1)},y)$ and $Z^{*(t)}$ is chosen as the subset
that maximise the observed correlation between the values of $\frac{ \tilde \pi_{Z^*}(x^{(j)})}{\tilde
\pi_{Z^*}(x^{(j)})} \quad (j=1,\dots \,, t-1)$ and the full Metropolis--Hastings ratio (or whatever other selected
criterion). As computing the real $\argmax_{ Z^* \subset Z }$ is expensive, in our practical implementation we resort to a
\emph{forward} selection scheme; starting with the factor $\xi_i$ with the maximal correlation we build $Z^{*(t)}$
merging one term at a time until a desired correlation level is achieved, the observed correlation after including
another term does not grow more than a small $\epsilon>0$ or the size of $Z^{*(t)}$ has reached a critical point for
computational purposes (e.g. $10\%$ of the whole sample $Z$).

A relevant warning is that if we rearrange terms during the run, not only reordering but also merging them, in
accordance to their correlation with the unmodified ratio, the resulting method has no theoretical guarantee since the
kernel is potentially changing at each iteration depending on properties of previous samples \citep{gelfand:sahu:1994}.

As with standard adaptive MCMC \citep{roberts:rosenthal:2005} we resort thus to a \emph{finite adaptation scheme};
we start with a fixed number of iterations to rank and rearrange the factors, followed by a fixed ordering to achieve ergodicity of the chain.
We test this procedure in Section \ref{sec:ex_logistic} on a simulated example.

Finally note that while we focused on the i.i.d.~setting, in more complex cases where the ratio is factored and Delayed
Acceptance can be applied, it is often the case that the optimal ordering of such factors is already known.


\section{Relation with other methods} \label{sec:others}

\subsection{Delayed Acceptance and Prefetching} \label{sec:optim_prefetch}

Prefetching, as defined by \cite{brockwell:2006}, is a programming method that accelerates the convergence of a single
MCMC chain by guessing future states in the path of a random walk Metropolis--Hastings Markov chain in order to use any
additional computing power available, in the form of extra parallel processors, to calculate in advance necessary
quantities (like the Metropolis--Hastings ratio) so that when the chain reaches a given state the computationally-heavy
part of that iteration are ready.

Clearly the usefulness of this technique depends on our ability to guess the path of the chain correctly and hence many
advanced prefetching strategies make use of the observed acceptance rate of the chain or even of a fast approximation
$\tilde \pi$ of the target distribution to select the most likely future outcomes.

Since an in-depth exploration of prefetching is outside the scope of this work the reader is referred to
\cite{strid:2010} and citations therein for a complete discussion of the argument. 

As mentioned above and demonstrated in \cite{strid:2010,angelino:etal:2014} if a cheap approximation $\tilde{\pi}$ of
the target density is available, it can be used to select more likely future paths of the chain and this results in an
efficient prefetching algorithm.

In our case the master process sequentially samples from the (Delayed Acceptance) chain by checking only the
(assumed) inexpensive first approximation $\rho_1(x,y) = \tilde \pi(y) / \tilde \pi(x)$ while the other additional
processors provide him the more expensive $\rho_2(x,y) = \pi(y)\tilde \pi(x) / \pi(x)\tilde \pi(y)$ computed
beforehand thanks to prefetching. The theoretical properties of the chain are unchanged while the achievable
speed-up may be substantial, especially for the first few additional processors.

\subsection{Alternative procedure for Delayed Acceptance} \label{sec:phil}

In the case that every factor $\rho_i(x,y)$ has roughly the same computational cost, Philip Nutzman suggested (personal
communication) that Delayed Acceptance can be slightly modified by taking the overall acceptance probability 
$$
\prod_{i=1}^d \min\left\{\rho_i(x,y),1\right\} \quad \text{ to be instead } \quad \min \limits_{k=1,\dots\,,d} \left\{ \prod_{i=1}^k
\rho_i(x,y),1\right\} \,.
$$
Such a decomposition follows from the same idea that one would like to compute as few factors as possible once one
realizes that the proposal is likely to be rejected. Under this modification the associated Markov chain still achieves the correct
target in the stationary regime and the procedure satisfies detailed balance, provided the ordering of the terms is
uniformly random.

An interesting consequence of this modification is that, as the number of factor increases, the acceptance rate
eventually stabilises, while for the method described in Section \ref{sec:zero_zero} the acceptance rate decreases to
zero. This property is indeed appealing, even thought this procedure logically takes longer to complete when compared
with the standard Delayed Acceptance (albeit less than the reference Metropolis--Hastings procedure).

The evident disadvantage of the modification in a general setting is that detailed balance implies that the factors are computed in
a random order at each iteration, making vain any attempt to adapt in terms of the ordering (Section
\ref{sec:optim_rank}) or to set the order based on respective computational costs.

This drawback can be somewhat reduced by combining the above two approaches; consider the decomposition
$$ {\pi(y)\,q(y,x)}\big/{\pi(x)q(x,y)} = \left[ \prod_{i=1}^{d_1} \rho_i(x,y) \right] \times \left[ \prod_{j=1}^{d_2} \phi_j(x,y) \right]$$
where $d_1 + d_2 = d$ and the factors $\rho_i$ and $\phi_j$ represent respectively cheap factors and costly factors.
By taking now 
$$
\min \limits_{(m=1,\dots\,,d_1)} \left\{1,\prod \limits_{i=1}^m{\rho_i(x,y)} \right\} \times \min
\limits_{(k=1,\dots\,,d_2)} \left\{1,\prod \limits_{j=1}^k {\phi_i(x,y)} \right\} 
$$
the algorithm computes cheap factors first and expensive factors last, applying the symmetry requirement to satisfy
detail balance inside each of both subsets. Clearly the above can be generalised to a larger number of subsets, each with $d_i$ factors in it.
Intuitively, this last modification can be explained as an early rejection of each of the intermediate acceptance/rejection steps inside a Delayed 
Acceptance scheme.

\begin{remark} Interestingly if $d_l  = 1 \ \forall l$ ($l$ being the number of subsets considered) this procedure
reduces to Delayed Acceptance, and for $l$ that increases and $d_l  > 1 \ \forall l$ this combined technique will
have a even lower overall acceptance rate than standard Delayed Acceptance. \end{remark}

\subsection{Delayed Acceptance and Slice Sampling}

As a final remark, we stress another analogy between our Delayed Acceptance algorithm
and slice sampling \citep{neal:1997,robert:casella:2004}.  Based on the same decomposition 
\eqref{eq:baselike}, slice sampling proceeds as follows
\begin{enumerate}
\item simulate $u_1,\ldots,u_n\sim\mathcal{U}(0,1)$ 
and set $\lambda_i=u_i\ell(\theta|x_i)$ $(i=1,\ldots,n)$;
\item simulate $\theta^\prime$ as a uniform under the constraints
$\ell_i(\theta^\prime|x_i)\ge \lambda_i$ $(i=1,\ldots,n)$.
\end{enumerate}
to compare with Delayed Acceptance which conversely
\begin{enumerate}
\item simulate $\theta^\prime\sim q(\theta^\prime|\theta)$;
\item simulate $u_1,\ldots,u_n\sim\mathcal{U}(0,1)$
and set $\lambda_i=u_i\ell(\theta|x_i)$ $(i=1,\ldots,n)$;
\item check that
$\ell_i(\theta^\prime|x_i)\ge \lambda_i$ $(i=1,\ldots,n)$.
\end{enumerate}
The differences between both schemes are thus that (a) slice sampling always accepts a move,
(b) slice sampling requires the simulation of $\theta^\prime$ under the constraints, which may
prove infeasible, and (c) Delayed Acceptance re-simulates the uniform variates in the event of a rejection. In this
respect, Delayed Acceptance appears as a ``poor man's" slice sampler in that values of $\theta's$ are proposed until one
is accepted.

\section{Examples}\label{sec:examples}

To illustrate the improvement brought by Delayed Acceptance, we study three different realistic settings that reflect on
the generality of the method. First, in Section \ref{sec:ex_logistic} we consider a Bayesian analysis of a logistic
regression model, to assess the computational gain brought by our approach in a ``Big-Data" environment where obtaining
the likelihood is the main computational burden.  Secondly (Section \ref{sec:ex_mala}) we examine a high dimensional toy
Normal-Normal model, sample with a geometric Metropolis adjusted Langevin algorithm where the main computational cost
comes from the proposal distribution which is position specific and involves derivatives of the density up till the
third level, which are computed numerically at each iteration.  Finally in Section \ref{sec:mixis} we investigate a
mixture model where a formal Jeffreys prior is used, as it is not available in closed-form and does require an expensive
approximation by numerical or Monte Carlo means. The latter example comes as a realistic setting where the prior itself
is a burdensome object, even for small datasets.

\subsection{Logistic Regression} \label{sec:ex_logistic}

While a simple model, or due to its simplicity, logistic regression is
widely used in applied statistics, especially in classification problems.  The
challenge in the Bayesian analysis of this model is not generic, since simple
Markov Chain Monte Carlo techniques providing satisfactory approximations, but
stems from the data-size itself. This explains why this model is used as a
benchmark in some of the recent accelerating papers
\citep{korattikara:chen:welling:2013, neiswanger:wang:xing:2013,
scott:etal:2013,wang:dunson:2013}. Indeed, in ``big Data" setups, MCMC is
deemed to be progressively inefficient and researchers are striving to keep
simulation effective, focusing mainly on parallel computing and on sub-sampling but also on replacing the classic
Metropolis scheme itself. 

We tested the proposed method against the standard Metropolis--Hastings algorithm on $10^6$ simulated data with a
$100$-dimensional parameter space.  The proposal distribution is Gaussian: $ y|x \sim \mathcal N(x,\Sigma)$ with
$\Sigma$ initialised to be $0.2 \times I_d$ ($d$ being the dimension of the parameter space) and then adapted.  The
Metropolis--Hastings benchmark was made adaptive by targeting the asymptotic optimal acceptance rate of $\alpha^*=0,234$
\citep{gilksetal97}.

Delayed Acceptance was optimised first against the ordering of the factors as explained in Section \ref{sec:optim}; we
split the data into subsamples of $10$ elements and computed their empirical correlation with the full
Metropolis--Hastings ratio as a criterion. Once these estimates were stable we merged into the surrogate target $\tilde
f$ the smallest number of subsamples needed to achieve a $\ge 0,85$ correlation with $r(x,y)$. As soon as the ordering
was fixed we computed $\delta$, the relative cost of the obtained $\rho_1$ with respect to the full ratio, and run the
chain for the remaining iterations optimising $\Sigma$ against the optimal acceptance rate found through \eqref{eq:Eff}.
We also added the modification explained in Section \ref{sec:al_modification} with $c$ set such that $b$ was slightly
lower than the optimal acceptance rate above. 

\begin{table}
    \begin{tabular}{ | l | c | c | c |}
    \hline
    \textbf{Algorithm} & \textbf{relative ESS} (aver.) & \textbf{relative ESJD} (aver.) & \textbf{relative Time} (aver.) \\ \hline
     DA-MH over MH & 1.1066 & 12.962 & 0.098 \\ \hline
    \end{tabular}
   
   \begin{center} 
   \begin{tabular}{ | l | c | c |}
    \hline
    \textbf{Algorithm} & \textbf{relative Eff gain (ESS)} (aver.) & \textbf{relative Eff gain (ESJD)} (aver.) \\ \hline
     DA-MH over MH & 5.47 & 56.18 \\ \hline
    \end{tabular}
   \end{center}
\caption{ Comparison between MH and MH with Delayed Acceptance on a logistic model. \textbf{ ESS} is the effective sample size, \textbf{ ESJD} the expected square jumping distance, \textbf{time} is the computation time.} \label{tab:logistic}
\end{table}

We collected $100$ repetitions of the experiment and the results are presented in Table \ref{tab:logistic}.
Before commenting the results we highlight the fact that this situation may seem not particularly appealing for Delayed
Acceptance and in fact straight application of the method by randomly choosing the composition of $\rho_1$ and $\rho_2$
may lead to variable results. Further coding effort is required here in order to choose adaptively how to split
the MH ratio. Borrowing from both Section \ref{sec:optim_rank} and the end of Section \ref{sec:optim_proposal}, i.e. by
choosing during the brief burn-in of the chain which subset best represents the whole likelihood and then, based on how
populated that subset is, targeting a specific acceptance ratio, produces both a completely automated MCMC version for
this kind of data (\emph{iid}) and better results under a time constraint.

As shown in Table \ref{tab:logistic}, while the assumption made in Section \ref{sec:optim} not
completely satisfied, the relative efficiency of Delayed Acceptance is higher that for MH by a factor of almost $6$.
We measured efficiency trough \emph{effective sample size} (ESS, from the \textbf{coda} R package \citep{r:coda}) or \emph{expected square jumping distance} (ESJD).
By choosing the first subsample to be \emph{informative} on the whole ratio there is practically no loss on ESS (while the estimated ESJD
actually increased) and, given the significantly reduced acceptance rate, the computing time is usually less then a fourth of the computing time of the corresponding optimal MH, taking into account the first part of chain used to determine the blocks ranking.

\subsection{G-MALA with Delayed Acceptance} \label{sec:ex_mala}

\subsubsection{MALA and Geometric MALA:   }

Random walk Metropolis--Hastings, while generic and popular, can struggle with posterior distributions in high
dimensions or in the presence of high correlation between some components. In such cases it is inefficient, with
low acceptance rate, poor mixing and highly correlated samples.  Metropolis adjusted Langevin algorithm  (MALA, see for
instance \cite{roberts:stramer:2002}) has been devised to overcome these difficulties by taking advantage of 
the gradient of the target distribution in the proposal mechanism, making the Markov chain more robust with respect to
the dimension of the problem and proposing broader moves with higher probability.  MALA is based on a Langevin
diffusion, with the target (the posterior distribution $\pi(\theta|y)$ in our case) as a stationary distribution,
defined by the SDE 
$$
\frac{d \theta}{dt} = \nabla_\theta \log( \pi(\theta|y) ) \frac{dt}{2} + \frac{dB}{dt} 
$$
where $B$ is a Brownian motion. Using a first-order discretisation the diffusion gives the following proposal mechanism:
$$ 
\theta' = \theta^{(i-1)} +  \varepsilon^2  \nabla_\theta \log( \pi(\theta^{(i-1)}|y) ) / 2 + \varepsilon Z
$$
where $\varepsilon$ is the step-size for the Euler's integration and $Z \sim \mathcal{N}(\mathbf{0},\mathbf{I})$. This
discretisation is then compensated by introducing an accept/reject probability similar to a Metropolis--Hastings
algorithm.

This diffusion is isotropic and will hence still be inefficient for highly correlated components or with very different
scales, as the step size $\varepsilon$ is fixed across dimensions.
\cite{roberts:stramer:2002} propose to alleviate the issue using a pre-conditioning matrix $A$ so that the proposal becomes
$$ \theta' = \theta^{(i-1)} +  \varepsilon^2 A^TA  \nabla_\theta \log( \pi(\theta^{(i-1)}|y) ) / 2 + \varepsilon  A Z.$$
\cite{christensen:05} demonstrate however that defining this matrix in general can be difficult and that tuning on the
go may result in an inappropriate asymptotic behaviour.

In a recent work \cite{girolami:2011} propose the Geometric-MALA in order to overcome this difficulty, advising the use
of a position specific metric for the matrix $A$, which takes advantage of the geometry of the target space that the
chain is exploring. They suggest in particular the Fisher-Rao metric tensor.  In terms of Bayesian inference, where the
target distribution is the posterior density, this choice translates into $A^TA$ being the expected Fisher information
matrix plus the negative Hessian of the log-prior.

This theoretically efficient solution also performs well in practice but comes with a serious computational burden in
the fact that at every evaluation of the Metropolis--Hastings ratio derivatives up till the third order of our
log-target distribution are needed and, in the event of them being analytically not available, expensive numerical
approximations are to be computed (see equation (10) of \cite{girolami:2011}).

\subsubsection{Sampling with Delayed Acceptance and GMALA:   }

Geometric-MALA represent a perfect application for Delayed Acceptance since we can naturally divide its
acceptance ratio into the product of the posterior ratio and the ratio of the proposals, the latter to be only computed
when the proposed point is associated with a relatively large posterior probability.

As described above, the computational bottleneck of the G-MALA lays in the computation of the third derivative of our
log-target at the proposed point, while the computation of the posterior itself has usually a low relative cost.
Moreover even with a non-symmetric efficient proposal mechanism (the discretised Langevin diffusion) G-MALA is still
close to a random walk and we expect the ratio of the proposal to be near 1, especially at equilibrium especially when
$\varepsilon$ is small. Therefore, the first ratio is inexpensive, relative to the second one, while the decision reached 
at the first stage should be consistent with the overall acceptance rate.

Given that optimal scaling for MALA in terms of the dimension $d$ of the target differs from the
random-walk setting (see \citealp{roberts:rosenthal:2001}), we set the variance of the random-walk normal component 
as $\sigma^2_d = \frac{\ell^2}{d^{1/3}}$. Borrowing from Section \ref{sec:optim_proposal}, we can obtain
the optimal acceptance rate for the DA-MALA, through Equation \eqref{eq:Eff}, by maximising
$$ \mathbf{Eff} (\delta,\ell) = \frac{h(\ell)}{\delta + \mathbb E \left[ \tilde \alpha \right] - \delta \times \mathbb E
\left[ \tilde \alpha \right]} = \frac{ 2 \ell^2 \Phi(-\frac{ K \ell^3}{2}) }{\delta + 2\Phi(-\frac{ K \ell^3}{2}) \times
(1 - \delta) } $$  
or equivalently
$$
\mathbf{Eff} (\delta,a) = -\left(\frac{2}{K}\right)^{\frac{2}{3}} \left[ \frac{a \Phi^{-1} \left( \frac{a}{2}
\right)^{\frac{2}{3}}}{\delta + a (1 - \delta) } \right]\,.
$$
In the above the computational cost per iteration is taken to be $c=\delta C$ for the posterior ratio, $C=1$ for the proposal ratio (and hence $c + \mathbb E \left[ \tilde \alpha \right] (C-c)$ for the whole kernel), $h(\ell)$ is again the speed of the limiting diffusion process and $K$ is a measure of roughness of the
target distribution, depending on its derivatives. Since the optimal $a^*$ is independent from $K$, we do not define it
more rigorously, referring to \cite{roberts:rosenthal:2001}. Figure \ref{fig:MALA_optim} shows that $a^*$ decreases with $\delta$, 
as is the case with random-walk Metropolis--Hastings. It reaches the known optimum for the standard MALA when $\delta =1$.

\begin{figure}
\caption{Optimal acceptance rate for the DA-MALA algorithm as a function of $\delta$. In red, the optimal
acceptance rate for MALA obtained by \cite{roberts:rosenthal:2001} is met for $\delta=1$.}
\label{fig:MALA_optim}
\includegraphics[scale=0.45]{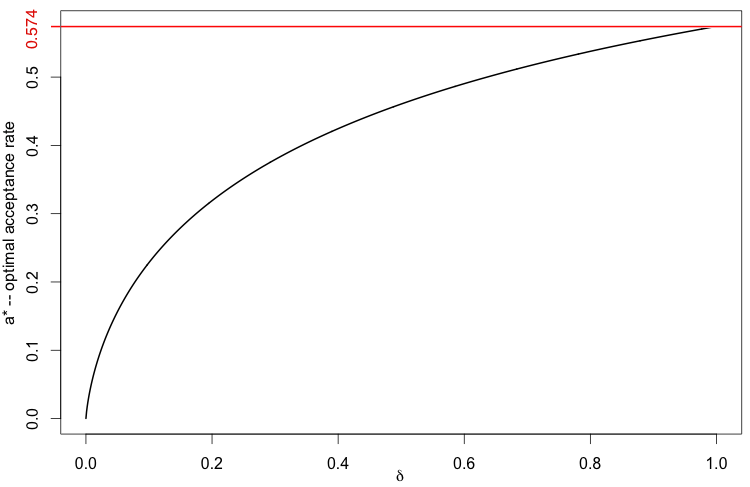}
\end{figure}

\subsubsection{Simulation study: }

To test the above assumptions we ran a toy MALA example where we drew $100$ samples from a $\mathcal N_d(\theta,I)$,
with $d=10$; $\pi(\theta)$ was set to be $\mathcal N_d(0,100)$. Figure \ref{fig:MALA_example} presents an example run.
We then repeated the experiment $100$ times and computed an average efficiency gain, defined either as ESS or as the ESJD, over the computing time.  We computed $\delta$ at each
run by averaging a few computed derivatives, required by the proposal ratio. We then adapt $\varepsilon$ to get the
optimal acceptance rate, being conservative in order to avoid overflow issues with the first-order numerical integrator.
Results are presented in Table \ref{tab:MALA}. Delayed Acceptance exhibits improvement by a factor of $10$ in this
example, obtained almost for free in terms of to coding time.

\begin{table}
\begin{center}
    \begin{tabular}{ | l | c | c | c | c | c | c |}
    \hline
    \textbf{Algorithm} & \textbf{ESS} (aver.) & \textbf{ESS} (sd) & \textbf{ESJD} (aver.) & \textbf{ESJD} (sd) & \textbf{time} (aver.) & \textbf{time} (sd) \\ \hline
       MALA & 7504.486 & 107.21 & 5244.946 & 983.473 & 176078 & 1562.3 \\ \hline
DA-MALA & 6081.023 & 121.42 & 5373.253 & 2148.761 & 17342.91 & 6688.3 \\ \hline    
    \end{tabular}
    
\vspace{5mm}    
 \begin{tabular}{ | l | c | c | c |}
    \hline
    \textbf{Algorithm} & {\textbf a} (aver.) & \textbf{ESS/time (aver.)} & \textbf{ESJD/time (aver.)} \\ \hline    
    MALA & 0.661 & 0.04 & 0.03 \\ \hline    
    DA-MALA & 0.09 & 0.35 & 0.31 \\ \hline    
  \end{tabular}
    
\end{center}
    \caption{ Comparison between standard geometric MALA and geometric MALA with Delayed Acceptance, with \textbf{ ESS} the effective sample size, \textbf{ ESJD } the expected square jumping distance, \textbf{time} the computation time and {\textbf a} the observed acceptance rate.}
\label{tab:MALA}
\end{table}

\begin{figure}
\caption{Comparison between geometric MALA (top panels) and geometric MALA with Delayed Acceptance (bottom panels):
marginal chains for two arbitrary components (left), estimated marginal posterior density for an arbitrary component
(middle), 1D chain trace evaluating mixing (right).}
\label{fig:MALA_example}
\includegraphics[scale=0.45]{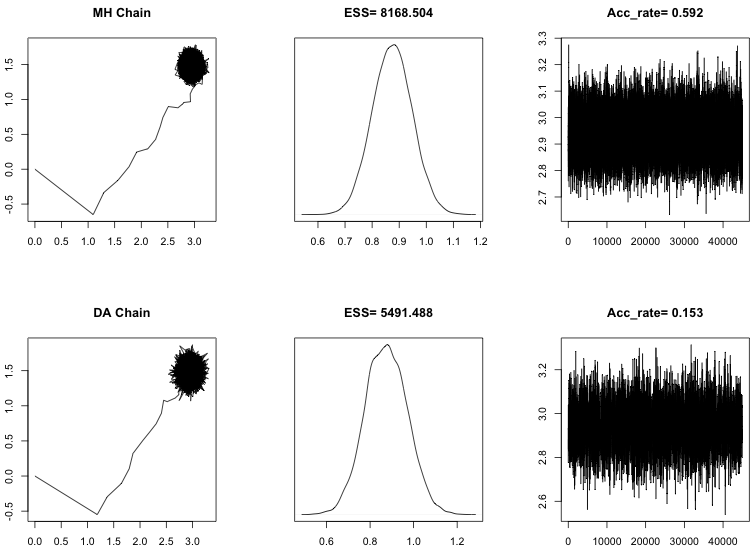}
\end{figure}

\subsubsection{HMC with Delayed Acceptance:   }

As a side note, while the reasoning applied to MALA does theory apply to Hamiltonian Monte Carlo (HMC), the
computational gain obtained through Delayed Acceptance is only connected with avoiding some proposal computations.
In a general HMC though (with both point-dependent and independent pre-conditioning matrices), proposing a new value
still involves the computation of $L-1$ (with $L$ the number of steps in the discretised--Hamiltonian
integration) derivatives, as only the starting point is recovered from the previous iteration, while computing the final
Metropolis--Hastings ratio involves just the extra computation at the end point. Therefore, in this setting, the
computational gain is much reduced.

\subsection{Mixture Model}\label{sec:mixis}

\subsubsection{Non-Informative inference on a Mixture Model: }
Consider a standard mixture model \citep{maclachlan:peel:2000} with a fixed number of components
\begin{equation}\label{eq:theMix}
\sum_{i=1}^k w_i\,f(x|\theta_i)\,,\quad\text{ with }\quad \sum_{i=1}^k w_i=1\,.
\end{equation}
This standard setting nonetheless offers a computational challenge in that the reference objective Bayesian approach 
based on the Fisher information and the associated Jeffreys prior \citep{jeffreys:1939,robert:2001}
is not readily available for computational reasons and has thus not been implemented so far. Proxys
using Jeffreys priors on the components of \eqref{eq:theMix} have been proposed instead, with the
drawback that since they always lead to improper posteriors, ad hoc corrections have to be implemented
\citep{diebolt:robert:1994,roeder:wasserman:1997,stephens:1997}.

When relying instead on dependent improper priors, it is not always the case that the posterior distribution is
improper. For instance, \cite{robert:titterington:1998} provide a location-scale representation that allows for some
improper prior. In the current paper, we consider instead the genuine Jeffreys prior for the complete set of parameters
in \eqref{eq:theMix}, derived from the Fisher information matrix for the whole model. While establishing the analytical
properness of the associated posterior is beyond the goal of the current paper, we handle large enough samples to posit
that a sufficient number of observations is allocated to each component and hence the likelihood function dominates the
prior distribution. (In the event the posterior remains improper, the associated MCMC algorithm should exhibit a
transient behaviour.)

Therefore, this is an appropriate and realistic example for evaluating Delayed Acceptance since the computation of the
prior density is clearly costly, relying on many integrals of the form:
\begin{equation} 
-\int_{\mathcal{X}} \frac{\partial^2 \log \left[\sum_{i=1}^k
w_i\,f(x|\theta_i)\right]}{\partial \theta_h \partial
\theta_j}\left[\sum_{i=1}^k w_i\,f(x|\theta_i)\right] \dd x \,.
\end{equation}
Indeed, these integrals cannot be computed analytically and thus their derivation involve numerical or Monte Carlo
integration. This setting is such that the prior ratio---as opposed to the more common case of the likelihood ratio---is
the costly part of the target evaluated in the Metropolis--Hastings acceptance ratio.
Moreover, since the Jeffreys prior involves a determinant, there is no easy way to split the computation in more parts
than  ``prior $\times$ likelihood". 
Hence, the Delayed Acceptance algorithm can be
applied by simply splitting between the prior $p^J(\psi)$ and the likelihood
$\ell(\psi|x)$ ratios, the later being computed first. Moreover, since the 
proposed prior is ``non informative", its influence on the definition of the posterior 
distribution should be small with respect to the likelihood function and, then, 
computing the likelihood ratio first should not have a substantial impact on the acceptance rate. 
However, the improper nature of the prior means using a second acceptance ratio solely based on the prior
can create trapping states in practice, even though the method remains theoretically valid.  We therefore opted for
stabilising this second step by saving a small fraction of the likelihood, corresponding to $5\%$ of
the sample, to regularise this second acceptance ratio. This choice translates into Algorithm \ref{algo:mix-algo}.

\begin{algorithm}
\caption{Metropolis--Hastings with Delayed Acceptance for Mixture Models}
\label{algo:mix-algo}
Set $\ell_2(\cdot|x) = \prod \limits_{i=1}^{ \left\lfloor p n \right\rfloor } \ell(\cdot|x_i)$ and  $\ell_1(\cdot|x) =
\prod \limits_{i=\left\lfloor p n \right\rfloor+1}^{n} \ell(\cdot|x_i)$ \quad where $p\in(0,1)$\hfill\break
\begin{enumerate}
\item Simulate $\psi^\prime\sim q(\psi^\prime|\psi)$;
\item Simulate $u_1,u_2\sim\mathcal{U}(0,1)$ and set $\lambda_1=u_1\ell_1(\psi|x)$;
\item \textbf{if} $\ell_1(\psi^\prime|x)\le \lambda_1$, repeat the current parameter value and return to 1; \\
\textbf{else} set $\lambda_2=u_2 \ell_2(\psi|x) p^J(\psi)$; 
\item \textbf{if} $\ell_2(\psi^\prime|x) p^J(\psi^\prime)\ge \lambda_2$ accept $\psi^\prime$; \\ \textbf{else} repeat the current parameter value and return to 1.
\end{enumerate}
\end{algorithm}

\subsubsection{Simulation study: }

An experiment comparing a standard Metropolis--Hastings implementation with a
Metropolis--Hastings version relying on Delayed Acceptance is summarised in Table \ref{tab:Mix}. 
Data were simulated from the following Gaussian mixture model: 
\begin{equation}
f(y|\theta)=0.10\mathcal{N}(-10,2)+0.65\mathcal{N}(0,5)+0.25\mathcal{N}(15,7). 
\label{eq:mix3model}
\end{equation}


\begin{table}
    \begin{tabular}{ | l | c | c | c | c | c | c |}
    \hline
    \textbf{Algorithm} & \textbf{ESS} (aver.) & \textbf{ESS} (sd) & \textbf{ESJD} (aver.) & \textbf{ESJD} (sd) & \textbf{time} (aver.) & \textbf{time} (sd) \\ \hline
     MH & 1575.963 & 245.96 & 0.226 & 0.44 & 513.95 & 57.81 \\ \hline
	 MH + DA & 628.767 & 87.86 & 0.215 & 0.45 & 42.22 & 22.95 \\ \hline    
    \end{tabular}

\caption{Comparison using different performance indicators in the example of mixture estimation, based
on 100 replicas of the experiments according to model \eqref{eq:mix3model} with a sample size $n=500$, 
$10^5$ MH simulations and $500$ samples for the prior estimation. (``ESS" is the effective sample size,``time" is the
computational time). The actual averaged gain ($\frac{ESS_{DA}/ESS_{MH}}{time_{DA}/time_{MH}}$) is $9.58$, higher than
the ``double average'' that the table above suggests as being around $5$. } \label{tab:Mix}
\end{table}

Both the standard Metropolis--Hastings and the Delayed Acceptance version are adapted against their respective optimal
acceptance rate, which is computed to be $2\%$, given that $\delta$ is empirically established to be $0.01$ using $500$
samples for the Monte Carlo estimation of the prior.  As a consequence the MH+DA algorithm will produce less unique
samples in the total $10^5$ iterations of the chain, as reflected in the lesser ESS in Table \ref{tab:Mix}, but this is
counterbalanced by the impressive  decrease in computing time, leading again to an overall gain in terms of $\text{ESS}
/ t$ of about $9$.

%


\section{Conclusion}\label{sec:conclusion}
We introduced in this paper Delayed Acceptance, a generic and easily implemented modification of the standard Metropolis--Hastings algorithm
that splits the acceptance rate into more than one step in order to increase the computational efficiency of the
resulting MCMC, under the sole condition that the Metropolis--Hastings ratio can be factorised this way.

The choice of splitting the target distribution into parts ultimately depends on the respective costs of computing the
said parts and of reducing theoretically the overall acceptance rate and expected square jump distance (ESJD).  Still,
this generic alternative to the standard Metropolis--Hastings approach can be considered on a customary basis, since it
both requires very little modification in programming and can be easily tested against the basic version, both
empirically and theoretically by the results of \eqref{sec:convergence}.  The Delayed Acceptance algorithm presented in
\eqref{sec:zero_zero} can significantly decrease the computational time \textit{per se} as well as the overall
acceptance rate. Nevertheless, the examples presented in Section \ref{sec:examples} suggest that the gain in terms of
computational time is not linear in the reduction of the acceptance rate, especially in the presence of optimisation
techniques like \eqref{sec:optim}.

Furthermore, our Delayed Acceptance algorithm does naturally merge with the widening range of prefetching techniques, in
order to make use of parallelisation and reduce the overall computational time even more significantly.  Most settings
of interest are open to take advantage of the proposed method, if mostly in the situation of Bayesian statistics where
the target density and/or the Metropolis--Hastings ratio always allow for a natural factorisation. The case when the
likelihood function can be factorised in an useful way represents the best gain brought by our solution, in terms of
computational time, and it may easily improve even more by exploiting parallelisation techniques. 

\subsection*{Acknowledgements}

Thanks to Christophe Andrieu for a very helpful discussion on an earlier version of the manuscript. The massive help provided by Jean-Michel Marin
and Pierre Pudlo towards an implementation on a large cluster has been 
fundamental in the completion of this work. Thanks to Samuel Livingstone for suggesting the Geometric MALA example and finally thanks to Philip Nutzman for the interesting conversation and for the suggestion of the method proposed in \eqref{sec:phil}.
 Christian P. Robert research is partly financed by Agence Nationale de la Recherche (ANR, 212, rue de Bercy 75012
Paris) on the 2012--2015 ANR-11-BS01-0010 grant ``Calibration'' and by a 2010--2015 senior chair grant of Institut Universitaire de France. Marco Banterle PhD is funded by Universit\'e Paris Dauphine.


\hyphenation{Post-Script Sprin-ger}

\end{document}